\newtheorem{theorem}{Theorem}
\newtheorem{lemma}{Lemma}
\renewcommand{\log}{\lg}
\newcommand{\lookup}
    {\ensuremath{\mathsf{lookup}}}
\newcommand{\nocc}
    {\ensuremath{\mathit{nocc}}}
\newcommand{\ndoc}
    {\ensuremath{\mathit{ndoc}}}
\newcommand{\rank}
    {\ensuremath{\mathsf{rank}}}
\newcommand{\search}
    {\ensuremath{\mathsf{search}}}
\newcommand{\select}
    {\ensuremath{\mathsf{select}}}
\newtheorem{example}{Example}
\newcommand{\ejemplo}[1]{}
\begin{document}

\begin{frontmatter}

\title{Document Listing on Repetitive Collections with Guaranteed Performance
\thanksref{CPM}}

\author{Gonzalo Navarro}
\address{Center for Biotechnology and Bioengineering (CeBiB),
Dept. of Computer Science, University of Chile, 
Beauchef 851, Santiago, Chile.
  \texttt{gnavarro@dcc.uchile.cl}}

\thanks[CPM]{Supported in part by Fondecyt grant 1-170048 and Basal Funds 
FB0001, Conicyt, Chile. A preliminary version of this article appeared in 
{\em Proc.  CPM 2017}.}


\begin{abstract}
We consider document listing on string collections, that is, finding in which 
strings a given pattern appears. In particular, we focus on repetitive 
collections: a collection of size $N$ over alphabet $[1,\sigma]$ is composed 
of $D$ copies of a string of size $n$, and $s$ 
edits are applied on ranges of copies. We introduce the first document listing 
index with size $\tilde{O}(n+s)$, precisely $O((n\log\sigma+s\log^2 N)\log D)$ 
bits, and with useful worst-case time guarantees: Given a pattern of length $m$,
the index reports the $\ndoc>0$ strings where it appears in time 
$O(m\log^{1+\epsilon} N \cdot \ndoc)$, for any constant $\epsilon>0$ (and 
tells in time $O(m\log N)$ if $\ndoc=0$). Our technique is to augment a range 
data structure that is commonly used on grammar-based indexes, so that instead 
of retrieving all the pattern occurrences, it computes useful summaries on 
them. We show that the idea has
independent interest: we introduce the first grammar-based index that, on
a text $T[1,N]$ with a grammar of size $r$, uses $O(r\log N)$ bits and 
counts the number of occurrences of a pattern $P[1,m]$ in time 
$O(m^2 + m\log^{2+\epsilon} r)$, for any constant $\epsilon>0$. We also give
the first index using $O(z\log(N/z)\log N)$ bits, where $T$ is parsed by
Lempel-Ziv into $z$ phrases, counting occurrences in time $O(m\log^{2+\epsilon} N)$.

\begin{keyword}
Repetitive string collections; Document listing; Grammar compression;
Grammar-based indexing; Range minimum queries; Range counting; 
Succinct data structures
\end{keyword}

\end{abstract}

\end{frontmatter}

\section{Introduction} \label{sec:introduction}

Document retrieval on general string collections is an area that has recently
attracted attention \cite{Nav14}. On the one hand, it is a natural
generalization of the basic Information Retrieval tasks 
carried out on search engines \cite{BYRN99,BCC10}, many of which are also useful
on Far East languages, collections of genomes, code repositories, multimedia 
streams, etc. It also enables phrase queries on natural language
texts. On the other hand, it raises a number of algorithmic 
challenges that are not easily addressed with classical pattern matching 
approaches.

In this paper we focus on one of the simplest document retrieval problems, 
{\em document listing} \cite{Mut02}. Let $\mathcal{D}$ be a collection of
$D$ documents of total length $N$. We want to build an index on $\mathcal{D}$
such that, later, given a search pattern $P$ of length $m$, we report the
identifiers of all the $\ndoc$ documents where $P$ appears. Given that $P$ may
occur $\nocc \gg \ndoc$ times in $\mathcal{D}$, resorting to pattern matching,
that is, finding all the $\nocc$ occurrences and then listing the distinct
documents where they appear, can be utterly inefficient. Optimal $O(m+\ndoc)$
time  document listing solutions appeared only in 2002 \cite{Mut02}, 
although they use too much space. There are also more recent statistically 
compressed indices \cite{Sad07,HSV09}, which are essentially space-optimal
with respect to the statistical entropy and pose only a small time penalty.

We are, however, interested in {\em highly repetitive} string
collections \cite{Naviwoca12}, which are formed by a few distinct 
documents and a number of near-copies of those. Such collections arise, for 
example, when sequencing the genomes of thousands of individuals of a few 
species, when managing versioned collections of documents like Wikipedia, and 
in versioned software repositories. 
Although many of the fastest-growing datasets are indeed repetitive, this is
an underdeveloped area: most succinct indices for string collections are based 
on statistical compression, and these fail to exploit repetitiveness \cite{KN13}.

\subsection{Modeling repetitiveness}

There are few document listing indices that profit from repetitiveness.
A simple model to analyze them is as follows \cite{MNSV10,GKNPS13,Naviwoca12}: 
Assume there is a single document of size $n$ on alphabet $[1,\sigma]$, and 
$D-1$ copies of it, on which $s$ single-character edits (insertions, deletions,
substitutions) are distributed arbitrarily,
forming a collection of size $N \approx nD$. This models, for 
example, collections of genomes and their single-point mutations. 
For versioned documents and software repositories, a better model is a 
generalization where each edit affects a range of copies, such as an
interval of versions if the collection has a linear versioning structure, or
a subtree of versions if the versioning structure is hierarchical.

The gold standard to measure space usage on repetitive collections is the size 
of the {\em Lempel-Ziv parsing} \cite{LZ76}. If we parse the concatenation of 
the strings in a repetitive collection under either of the models above, we 
obtain at most $z = O(n/\log_\sigma n + s) \ll N$ phrases. 
Therefore, while a statistical compressor would require basically $N\log\sigma$
bits if the base document is incompressible \cite{KN13}, we can aim to reach as 
little as $O(n\log\sigma + s\log N)$ bits by expoiting repetitiveness via
Lempel-Ziv compression (an arbitrary Lempel-Ziv pointer 
requires $O(\log N)$ bits, but those in the first document could use
$O(\log n)$).

This might be too optimistic for an index, however, as there is no known way 
to extract substrings efficiently from Lempel-Ziv compressed text. Instead, 
{\em grammar compression} allows extracting any text symbol in logarithmic time
using $O(r\log N)$ bits, where $r$ is the size of the grammar 
\cite{BLRSRW15,VY13}. 
It is possible to obtain a grammar of size $r=O(z\log(N/z))$ 
\cite{CLLPPSS05,HLR16}, which using standard methods \cite{Ryt03} can be 
tweaked to $r=n/\log_\sigma N + s\log N$ under our repetitiveness model. 
Thus the space we might aim at for indexing is 
$O(n\log\sigma + s\log^2 N)$ bits.

\subsection{Our contributions}

Although they perform reasonably well in practice, none of the existing 
structures for document listing on repetitive collections \cite{CM13,GKNPS13} 
offer good worst-case time guarantees combined with worst-case space guarantees
that are appropriate for repetitive collections, that is, growing with $n+s$ 
rather than with $N$. 
In this paper we present the {\em first document listing index offering good 
guarantees in space and time for repetitive collections}: our index

\begin{enumerate}
	\item uses $O((n\log\sigma + s\log^2 N)\log D)$ bits of space,
and
	\item performs document listing in time $O(m\log^{1+\epsilon} N 
\cdot\ndoc)$, for any constant $\epsilon>0$.
\end{enumerate}

That is, at the price of being an $O(\log D)$ space factor away from what 
could be hoped from a grammar-based index, our index offers document listing
with useful time bounds per listed document.
The result is summarized in Theorem~\ref{thm:main}.

We actually build on a grammar-based document listing index \cite{CM13} that
stores the lists of the documents where each nonterminal appears, and 
augment it by rearranging the nonterminals in different orders, following 
a wavelet tree \cite{GGV03} deployment that guarantees that only $O(m\log r)$ 
ranges of lists have to be merged at query time. 
We do not store the lists themselves in
various orders, but just succinct range minimum query (RMQ) data structures 
\cite{FH11} that allow implementing document listing on ranges of lists
\cite{Sad07}. Even those RMQ structures are too large for our purposes, so 
they are further compressed exploiting the fact that their 
underlying data has long increasing runs, so the structures are reduced with
techniques analogous to those developed for the ILCP data structure 
\cite{GKNPS13}. 

The space reduction brings new issues, however, because we 
cannot afford storing the underlying RMQ sequences. These problems are 
circumvented with a new, tailored, technique to extract the distinct elements
in a range that might have independent interest (see Lemma~\ref{lem:alg} in
Appendix~\ref{sec:app}).

\paragraph*{Extensions}
The wavelet tree \cite{GGV03} represents a two-dimensional grid with points.
It is used in grammar-based indexes \cite{CNfi10,CN12,CE18} to enumerate all the
occurrences of the pattern: a number of {\em secondary} occurrences are 
obtained from each point that qualifies for the query. At a high level, our
idea above is to compute {\em summaries} of the qualifying points instead of 
enumerating them one by one. We show that this idea has independent interest
by storing the number of secondary occurrences that can be obtained from each 
point. The result is an index of $O(r\log N)$ bits, similar to the size of
previous grammar-based indexes \cite{CNfi10,CN12,CE18}, and able to 
count the number of occurrences of the pattern in time 
$O(m^2 + m\log^{2+\epsilon} r)$ for any constant $\epsilon>0$ (and
$O(m(\log N + \log^{2+\epsilon} r))$ if the grammar is balanced); see
Theorem~\ref{thm:count}.
Current grammar-based indexes are unable to count
the occurrences without locating them one by one, so for the first time
a grammar-based index can offer efficient counting.
Further, by using recent techniques \cite{GNP18}, we also obtain the improved
time $O(m\log N + \log^\epsilon r \cdot \nocc)$ for an index based on a
balanced grammar that reports the $\nocc$ occurrences; see 
Theorem~\ref{thm:locate}. Indeed, Lempel-Ziv based indexes are also unable
to count without locating. As a byproduct of our counting grammar-based index,
we obtain a structure of $O(z\log(N/z)\log N)$ bits, where $z \le r$ is the 
size of the Lempel-Ziv parse of $T$, that can count in time 
$O(m\lg^{2+\epsilon} N)$; see Theorem~\ref{thm:countz}.

As another byproduct, we improve an existing result \cite{NNR13} on computing 
summaries of two-dimensional points in ranges, when the points have associated
values from a finite group. We show in Theorem~\ref{thm:sum} that, within 
linear space, the time to
operate all the values of the points in a given range of an
$r \times r$ grid can be reduced from $O(\log^3 r)$ to 
$O(\log^{2+\epsilon} r)$, for any constant $\epsilon>0$.

\section{Related work}

The first optimal-time and linear-space solution to document listing is due to
Muthukrishnan \cite{Mut02}, who solves the problem in $O(m+\ndoc)$ time using
an index of $O(N\log N)$ bits of space. Later solutions \cite{Sad07,HSV09}
improved the space to essentially the statistical entropy of $\mathcal{D}$,
at the price of multiplying the times by low-order polylogs of $N$ (e.g.,
$O(m+\log N \cdot \ndoc)$ time with $O(N)$ bits on top of the entropy
\cite{Sad07,BN13}). However, statistical entropy does not capture repetitiveness
well \cite{KN13}, and thus these solutions are not satisfactory in repetitive
collections.

There has been a good deal of work on pattern matching indices for repetitive 
string collections \cite[Sec 13.2]{Nav16}: building on 
regularities of suffix-array-like structures 
	\cite{MNSV10,NPCHIMP13,NPLHLMP13,BCGPR15,GNP18}, 
on grammar compression \cite{CNfi10,CN12,CE18}, 
on variants of Lempel-Ziv compression \cite{KN13,GGKNP12,DJSS14,BEGV18}, 
and on combinations \cite{GGKNP12,GGKNP14,HLSTY10,YWLWX13,BGGMS14,NIIBT16,NP18}.
However, there has been little work on document retrieval structures for 
repetitive string collections.

One precedent is Claude and Munro's index based on grammar compression
\cite{CM13}. It builds on a grammar-based pattern-matching index \cite{CN12}
and adds an {\em inverted index} that explicitly indicates the documents where 
each nonterminal appears; this inverted index is also grammar-compressed. To 
obtain the answer, an unbounded number of those lists of documents must be 
merged. No relevant worst-case time or space guarantees are offered.

Another precedent is ILCP \cite{GKNPS13}, where it is shown that an array 
formed by interleaving the longest common prefix arrays of the documents in
the order of the global suffix array, ILCP, has long increasing runs on
repetitive collections.
Then an index of size bounded by the runs in the suffix array \cite{MNSV10}
and in the ILCP array performs document listing in time $O(\search(m) +
\lookup(N) \cdot \ndoc)$, where $\search$ and $\lookup$ are the search and 
lookup time, respectively, of a run-length compressed suffix array 
\cite{MNSV10,GNP18}. Yet, there are only average-case bounds for the size of the
structure in terms of $s$: If the base document is generated at random and 
the edits are spread at random, then the structure uses
$O(n\log N + s\log^2 N)$ bits on average. 

The last previous work is PDL \cite{GKNPS13}, which stores inverted lists at
sampled nodes in the suffix tree of $\mathcal{D}$, and then grammar-compresses
the set of inverted lists. For a sampling step $b$, it requires 
$O((N/b)\log N)$ bits plus the (unbounded) space of the inverted lists. 
Searches that lead to the sampled nodes have their
answers precomputed, whereas the others cover a suffix array range of size
$O(b)$ and are solved by brute force in time $O(b\cdot\lookup(N))$. 

To be fair, those indexes perform well in many practical situations 
\cite{GHKKNPS17}. However, in this article we are interested in whether
providing worst-case guarantees in time and space.

\section{Basic Concepts} \label{sec:related}

\subsection{Listing the different elements in a range} \label{sec:listing}

Let $L[1,t]$ be an array of integers in $[1,D]$.  Muthukrishnan \cite{Mut02}
gives a structure that, given a range $[i,j]$, lists all the $\ndoc$
distinct elements in $L[i,j]$ in time $O(\ndoc)$. He defines an array $E[1,t]$ 
(called $C$ in there) storing in $E[k]$ the largest position $l<k$ where 
$L[l]=L[k]$, or
$E[k]=0$ if no such position exists. Note that the leftmost positions of the
distinct elements in $L[i,j]$ are exactly those $k$ where $E[k] < i$.
He then stores a data structure supporting range-minimum queries (RMQs) on $E$, 
$\textsc{rmq}_E(i,j) = \mathrm{argmin}_{i \le k \le j} E[k]$ \cite{FH11}.
Given a range $[i,j]$, he computes $k = \textsc{rmq}_E(i,j)$. If
$E[k] < i$, then he reports $L[k]$ and continues recursively on $L[i,k-1]$ and 
$L[k+1,j]$. Whenever it turns out that $E[k] \ge i$ for an interval $[x,y]$, 
there are no leftmost occurrences of $L[i,j]$ within $L[x,y]$, so this interval
can be abandoned.
It is easy to see that the algorithm takes $O(\ndoc)$ time and uses $O(t\log t)$
bits of space; the RMQ structure uses just $2t+o(t)$ bits and answers 
queries in constant time \cite{FH11}.

Furthermore, the RMQ structure does not even access $E$. Sadakane \cite{Sad07}
replaces $E$ by a bitvector $V[1,D]$ to mark which elements have been reported.
He sets $V$ initially to all zeros and replaces the test $E[k] < i$ by
$V[L[k]]=0$, that is, the value $L[k]$ has not yet been reported (these tests
are equivalent only if we recurse left and then right in the interval 
\cite{Nav14}). If so, he reports $L[k]$ and sets $V[L[k]] \leftarrow 1$. 
Overall, he needs only $O(t+D)$ bits of space on top of $L$, and still runs in 
$O(\ndoc)$ time ($V$ can be reset to zeros by rerunning the query or through 
lazy initialization). Hon et al.~\cite{HSV09} further reduce the extra space to
$o(t)$ bits, yet increasing the time, via sampling the array $E$.

\ejemplo{

\begin{figure}[t]
\begin{center}
\includegraphics[width=0.5\textwidth]{muthu.pdf}
\end{center}
\caption{An example of the structure to find the distinct elements in a
range. Array $L$ contains the elements, $E$ contains the pointers to previous
occurrences, $F$ marks the run heads in $E$, and $E'$ stores those run heads.}
\label{fig:muthu}
\end{figure}

\bigskip

\begin{example}
Fig.~\ref{fig:muthu} shows an example on an array $L[1,13]$; for now consider
only arrays $L$ and $E$. To find the distinct elements in $L[5,13]$ we start
with $k = \textsc{rmq}_E(5,13) = 7$. Since $E[7]=2<5$ (or, in Sadakane's
version, since $L[7]=2$ has not been reported), we report value $L[7]=2$ and
recurse on $L[5,6]$ and $L[8,13]$. In the first we compute $k =
\textsc{rmq}_E(5,6) = 5$ and, since $E[5]=4<5$ (or, in Sadakane's version,
since $L[5]=1$ has not been reported), we report value $L[5]=1$ and recurse on
$L[6,6]$. But $E[6]=5$ is not less than $5$ (or, in Sadakane's version,
$L[6]=1$ has already been reported), we do not continue. Now returning to
$L[8,13]$, we compute $k=\textsc{rmq}_E(8,13) = 8$. Since $E[8]=3<5$ (or, in
Sadakane's version, since $L[8]=3$ has not been reported), we report $L[8]=3$
and recurse on $L[9,13]$. We compute $k=\textsc{rmq}_E(9,13) = 12$. Since
$E[12]=6$ is not less than $5$ (or, in Sadakane's version, since $L[12]=1$ has
already been reported), we finish.
\end{example}

}

In this paper we introduce a variant of Sadakane's document listing technique
that might have independent interest; see Section~\ref{sec:dlist} and
Lemma~\ref{lem:alg} in Appendix~\ref{sec:app}.

\subsection{Range minimum queries on arrays with runs} \label{sec:rmq}

Let $E[1,t]$ be an array that can be cut into $\rho$ runs of 
nondecreasing values. Then it is possible to solve RMQs in $O(\log\log t)$
time plus $O(1)$ accesses to $E$ using $O(\rho\log(t/\rho))$ bits. The idea is 
that the possible minima (breaking ties in favor of the leftmost) in
$E[i,j]$ are either $E[i]$ or the positions where runs start in the range.
Then, we can use a sparse bitvector $F[1,t]$ marking with $F[k]=1$ the run 
heads. We also define an array $E'[1,\rho]$, so that if $F[k]=1$ then 
$E'[\rank_1(F,k)] = E[k]$, where
$\rank_v(F,k)$ is the number of occurrences of bit $v$ in $F[1,k]$.
We do not store $E'$, but just an RMQ structure
on it. Hence, the minimum of the run heads in $E[i,j]$ can be found by
computing the range of run heads involved, $i'=\rank_1(F,i-1)+1$ and
$j'=\rank_1(F,j)$, then finding the smallest value among them in $E'$ with
$k' = \textsc{rmq}_{E'}(i',j')$, and mapping it back to $E$ with $k =
\select_1(F,k')$, where
$\select_v(F,k')$ is the position of the $k'$th occurrence of bit $v$ in $B$.
Finally, the RMQ answer is either $E[i]$ or $E[k]$, so we
access $E$ twice to compare them. 

\ejemplo{

\bigskip
\begin{example} \label{ex:runlenRMQ}
The array $E$ of Fig.~\ref{fig:muthu} has $\rho=3$ runs, $E[1,6]$, $E[7,11]$, and
$E[12,13]$. The positions of the run heads are marked in $F$, and the smaller
array $E'$ contains the run heads. The answer to $\textsc{rmq}_E(5,13)$ can be
only $5$ (the first element of the interval, not a run head) or the position of
some of the involved run heads, $7$ and $12$. To find the smallest run head,
we compute the appropriate range in $E'$, $i'=\rank_1(F,5-1)+1=2$ and
$j'=\rank_1(F,13)=3$. The actual RMQ structure is built on the much shorter
$E'$, where we find $k' = \textsc{rmq}_{E'}(2,3)=2$. This position is mapped
back to $k=\select_1(F,2)=7$. Then $\textsc{rmq}_E(5,13)$ is either $5$ (the
leftmost element in the range) or $7$ (the smallest involved run head). We
compare $E[5]$ with $E[7]$ and choose the smaller, $E[7]$.
\end{example}

}

This idea was used by Gagie et al.\ \cite[Sec~3.2]{GKNPS13} for runs of equal 
values, but it works verbatim for runs of nondecreasing values. They show how to
store $F$ in $\rho \lg (t/\rho) + O(\rho)$ bits so that it solves rank in
$O(\log\log t)$ time and select in $O(1)$ time, by augmenting a sparse bitvector
representation \cite{OS07}. This dominates the space and time of the whole
structure.

The idea was used even before by Barbay et al.\ \cite[Thm.~2]{BFN12}, for
runs of nondecreasing values. They 
represented $F$ using $\rho\log(t/\rho)+O(\rho)+o(t)$ bits so that the 
$O(\log\log t)$ time becomes $O(1)$, but we cannot afford the $o(t)$
extra bits in this paper. 

\subsection{Wavelet trees} \label{sec:wt}

A wavelet tree \cite{GGV03} is a sequence representation that supports, in
particular, two-dimensional orthogonal range queries \cite{Cha88,Nav12}. 
Let $(1,y_1), (2,y_2), \ldots, (r,y_r)$ be a sequence of points with 
$y_i \in [1,r]$, and let $S=y_1 y_2 \ldots y_r$ be the $y$ coordinates in order.
The wavelet tree is a perfectly balanced binary tree where each node 
handles a range of $y$ values. The root handles $[1,r]$. If a node handles 
$[a,b]$ then its left child handles $[a,\mu-1]$ and its right child handles 
$[\mu,b]$, with $\mu=\lceil(a+b)/2\rceil$. The leaves handle individual $y$ 
values. If a node handles range $[a,b]$, then it represents the subsequence 
$S_{a,b}$ of $S$ formed by the $y$ coordinates that belong to $[a,b]$. 
Thus at each level the
strings $S_{a,b}$ form a permutation of $S$. What is stored for each such
node is a bitvector $B_{a,b}$ so that $B_{a,b}[i]=0$ iff $S_{a,b}[i] < \mu$,
that is, if that value is handled in the left child of the node. Those 
bitvectors are provided with support for rank and select queries.
The wavelet tree has height $\log r$, and its total space requirement for all
the bitvectors $B_{a,b}$ is $r\log r$ bits. The extra structures for rank and
select add $o(r\log r)$ further bits and support the queries 
in constant time \cite{Cla96,Mun96}. 

We will use wavelet trees where there can be more than one point per column,
say $p\ge r$ points in total.
To handle them, we add a bitvector $R[1,p+1] = 1 0^{c_1-1} 1 0^{c_2-1} 
\ldots 1 0^{c_r-1} 1$, if there are $c_j$ points in column $j$. Then any 
coordinate range $[x_1,x_2]$ is mapped to the wavelet tree columns 
$[\select_1(R,x_1),\select_1(R,x_2+1)-1]$. Conversely, a column $j$
returned by the wavelet tree can be mapped back to the correct coordinate
$x = \rank_1(R,j)$. The wavelet tree then representes a string $S$ of 
length $p$ over the alphabet $[1,r]$ using $p\log r + o(p\log r)$ bits, to
which $R$ adds $p+o(p)$ bits to implement rank and select in constant time.

\ejemplo{

\begin{figure}[t]
\begin{center}
\includegraphics[width=\textwidth]{wtree.pdf}
\end{center}
\caption{An example of a wavelet tree on the grid $[1,13] \times [1,7]$,
where the points have labels. The data in gray is conceptual; only the one in
black is actually represented. We give the names of the sequences up to the
second level only to avoid cluttering.}
\label{fig:wtree}
\end{figure}

\bigskip

\begin{example}
Fig.~\ref{fig:wtree} shows a grid of $p=13$ columns and $r=7$
rows. Only the bitvectors $B_{a,b}$ are represented; the strings $S_{a,b}$ are
conceptual. The example shows that we can also associate {\em labels} with
the points, and these induce (virtual) arrays $A_{a,b}$ associated with the
corresponding points in $S_{a,b}$. We do not usually store the arrays $A_{a,b}$
explicitly, but rather some information on them.
\end{example}

}

With the wavelet tree one can recover any $y_i$ value by tracking
it down from the root to a leaf, but let us describe a more general procedure,
where we assume that the $x$-coordinates are already mapped.

\paragraph*{Range queries}
Let $[x_1,x_2] \times [y_1,y_2]$ be a query range. The number of points that
fall in the range can be counted in $O(\log r)$ time as follows. We start at
the root with the range $S[x_1,x_2] = S_{1,r}[x_1,x_2]$. Then we project the
range both left and right, towards $S_{1,\mu-1}[\rank_0(B_{1,r},x_1-1)+1,
\rank_0(B_{1,r},x_2)]$ and $S_{\mu,r}[\rank_1(B_{1,r},x_1-1)+1,
\rank_1(B_{1,r},x_2)]$, respectively, with $\mu=\lceil(r+1)/2\rceil$. If some
of the ranges is empty, we stop the recursion on that node. If the interval
$[a,b]$ handled by a node is disjoint with $[y_1,y_2]$, we also stop. If the
interval $[a,b]$ is contained in $[y_1,y_2]$, then all the points in the $x$
range qualify, and we simply sum the length of the range to the count.
Otherwise, we keep splitting the ranges recursively. It is well known that
the range $[y_1,y_2]$ is covered by $O(\log r)$ wavelet tree nodes, and that
we traverse $O(\log r)$ nodes to reach them (see Gagie et al.~\cite{GNP11} 
for a review of this and more refined properties). If we also want to report all
the corresponding $y$ values, then instead of counting the points found, we 
track each one individually towards its leaf, in $O(\log r)$ time. At the 
leaves, the $y$ values are sorted. 

\ejemplo{

\bigskip
\begin{example}
To count the number of points inside $[3,6] \times [2,6]$, we start with
$S_{1,7}[3,6] = 4,5,3,5$, and project it to $S_{1,3}[2,2] = 3$ and
$S_{4,7}[2,4] = 4,5,5$. The left range, in $S_{1,3}$, is projected to the
right only, because the left node handles $S_{1,1}$, whose $y$ range has no
intersection with the query range $[2,6]$. The right projection is
$S_{2,3}[2,2]=3$. Since the $y$ range of $S_{2,3}$ is contained in that of the
query, $[2,6]$, we already count the point in $S_{2,3}[2,2]$ as belonging to
the result.  The other range, $S_{4,7}[2,4]$, is projected to the left, 
$S_{4,5}[2,4]=4,5,5$, whereas the projection to the right is empty, 
$S_{6,7}[1,0]$. Since the $y$ range of $S_{4,5}$ is completely contained in 
that of the query, we count the $3$ points in $S_{4,5}[2,4]$ as part of the 
query, and answer that there are $4$ points in the range without need of 
tracking those points down, but we may do if we want to find their coordinates.
In total, we traverse at most the $O(\log y)$ maximal nodes that cover the 
query interval $[2,6]:$ $S_{2,3}$, $S_{4,5}$, $S_{6,6}$, and their ancestors.
\end{example}

}

\paragraph*{Faster reporting}
By using $O(r\log r)$ bits, it is possible to track the positions faster in 
upward direction, and associate the values with their root positions.
Specifically, by using $O((1/\epsilon)r\log r)$ bits, one can reach the root 
position of a    
symbol in time $O((1/\epsilon)\log^\epsilon r)$, for any $\epsilon>0$
\cite{Cha88,Nav12}.
Therefore, the $\nocc$ results can be extracted in time $O(\log r +
\nocc\,\log^\epsilon r)$ for any constant $\epsilon$.

\paragraph*{Summary queries}
Navarro et al.~\cite{NNR13} showed how to perform {\em summary} queries on
wavelet trees, that is, instead of listing all the points that belong to a
query range, compute some summary on them faster than listing the points one
by one. For example, if the points are assigned values in $[1,N]$, then one
can use $O(p\log N)$ bits and compute the sum, average, or variance of the 
values associated with points in a range in time $O(\log^3 r)$, or 
their minimum/maximum in $O(\log^2 r)$ time. The idea is to associate with
the sequences $S_{a,b}$ other sequences $A_{a,b}$ storing the values associated
with the corresponding points in $S_{a,b}$, and carry out range queries on the 
intervals of the sequences $A_{a,b}$ of the $O(\log r)$ ranges into which 
two-dimensional queries are decomposed, in order to compute the desired 
summarizations. To save space, the explicit sequences $A_{a,b}$ are not
stored; just sampled summary values.

\ejemplo{

\bigskip
\begin{example}
Consider the previous example query, now assuming that the labels of the grid
are weights associated with the points. We could associate with $A_{a,b}$ the
sum of its prefixes, $P_{a,b}[i] = \sum_{k=1}^i A_{a,b}[k]$. Then, instead of
counting the points in $[3,6] \times [2,6]$, we could sum up its weights:
since all the points are in $S_{2,3}[2,2]$ and $S_{4,5}[2,4]$, the sum of the
weights is $(P_{2,3}[2]-P_{2,3}[1])+(P_{4,5}[4]-P_{4,5}[1])$. We can then
compute sums in time $O(\log r)$, but use $O(p\log r\log N)$ bits. Sampling
can reduce this space while increasing the time.
\end{example}

}

In this paper we show that the $O(\log^3 r)$ time can be improved
to $O(\log^{2+\epsilon}r)$, for any constant $\epsilon>0$, within the same 
asymptotic space; see Theorem~\ref{thm:sum} in Section~\ref{sec:count}.

\subsection{Grammar compression} \label{sec:grammar}

Let $T[1,N]$ be a sequence of symbols over alphabet $[1,\sigma]$. Grammar
compressing $T$ means finding a context-free grammar that generates $T$ and
only $T$. The grammar can then be used as a substitute for $T$, which provides
good compression when $T$ is repetitive. We are interested, for simplicty, in
grammars in Chomsky normal form, where the rules are of the form 
$A \rightarrow BC$ or $A \rightarrow a$, where $A$, $B$, and $C$ are 
nonterminals and $a \in [1,\sigma]$ is a terminal symbol. For every grammar, 
there is a proportionally sized grammar in this form.

A Lempel-Ziv parse \cite{LZ76} of $T$ cuts $T$ into $z$ phrases, so that each
phrase $T[i,j]$ appears earlier in $T[i',j']$, with $i' < i$. It is known that
the smallest grammar generating $T$ must have at least $z$ rules
\cite{Ryt03,CLLPPSS05}, and that it is possible to convert a Lempel-Ziv parse 
into a grammar with $r=O(z \log(N/z))$ rules 
\cite{Ryt03,CLLPPSS05,Sak05,Jez15,Jez16}. Furthermore, such grammars can be 
balanced,
that is, the parse tree is of height $O(\log N)$. By storing the length of
the string to which every nonterminal expands, it is easy to access any
substring $T[i,j]$ from its compressed representation in time $O(j-i+\log N)$
by tracking down the range in the parse tree. This can be done even on 
unbalanced grammars \cite{BLRSRW15}. The total space of this 
representation, with a grammar of $r$ rules, is $O(r\log N)$ bits.

\ejemplo{

\begin{figure}[t]
\begin{center}
\includegraphics[width=0.7\textwidth]{grammar.pdf}
\end{center}
\caption{The grammar of three strings (or documents) $D_1$, $D_2$, and $D_3$,
each with an edit with respect to the previous one. On top we show the parse
tree of each document, in the middle the grammar in Chomsky normal form
(removing the rules $A \rightarrow a$ for conciseness), and in the bottom the
{\em inverted list} of the documents where each nonterminal appears.}
\label{fig:grammar}
\end{figure}

\bigskip
\begin{example}
Fig.~\ref{fig:grammar} shows the grammar compression of a set of three
strings, or documents, $D_1 = \texttt{abracada}$, $D_2 = \texttt{abrakada}$, 
and $D_3 = \texttt{ablakada}$. Each document has edits with respect to the
previous one. This follows the way in which we will use grammars in this 
article: $D_1$ has a trivial balanced grammar, and then the grammar of each
$D_d$ is equal to that of $D_{d-1}$ except for the edits, which induce new
terminals up to the root. Disregard the inverted lists for now.
\end{example}

}

\subsection{Grammar-based indexing} \label{sec:index}

The pattern-matching index of Claude and Navarro \cite{CNfi10} builds on a
grammar in Chomsky normal form that generates a text $T[1,N]$, with $r$ rules
of the form $A \rightarrow BC$. Let $s(A)$ be the string generated by 
nonterminal $A$. Then they collect the distinct strings $s(B)$ for all those 
nonterminals $B$, reverse them, and lexicographically sort them, obtaining
$s(B_1)^{rev}<\ldots<s(B_{r'})^{rev}$, for $r' \le r$. They also collect the
distinct 
strings $s(C)$ for all those nonterminals $C$ and lexicographically sort them,
obtaining $s(C_1)<\ldots<s(C_{r''})$, for $r'' \le r$. They create a set of 
points in $[1,r']\times[1,r'']$ so that $(i,j)$ is a point (corresponding to 
nonterminal $A$) if the rule that defines $A$ is $A \rightarrow B_i C_j$. Those
$r$ points are stored in a wavelet tree. Note that the nonterminals of the form
$A \rightarrow a$ are listed as some $B_i$ or some $C_j$ (or both), yet only 
the rules of the form $A \rightarrow BC$ have associated points in the grid. 
Since there may be many points per column, we use the coordinate mapping 
described in Section~\ref{sec:wt}. The space is thus $r\log r'' + o(r\log r'')
+O(r+r') \le r\log r + o(r\log r)$ bits.

\ejemplo{

\begin{figure}[t]
\begin{center}
\includegraphics[width=0.5\textwidth]{grid.pdf}
\end{center}
\caption{The grid structure formed by the sorted strings $s(B_i)^{rev}$ (on top)
and $s(C_j)$ (on the left). Points are labeled by the nonterminals $A$ that
connect $A \rightarrow B_i C_j$.}
\label{fig:grid}
\end{figure}

\bigskip
\begin{example}
Fig.~\ref{fig:grid} shows the grid corresponding to the grammar of
Fig.~\ref{fig:grammar}. If we separate points in the same column as explained, 
the result is the grid of Fig.~\ref{fig:wtree}, which is represented by that 
wavelet tree, and the labels correspond to the nonterminals. The mapping 
bitvector is $R=1111101011111$.
\end{example}

}

To search for a pattern $P[1,m]$, they first find the {\em primary occurrences},
that is, those that appear when $B$ is concatenated with $C$ in a rule
$A \rightarrow BC$. The {\em secondary occurrences}, which appear when $A$ is 
used elsewhere, are found in a way that does not matter for this paper.
To find the primary occurrences, they cut $P$ into two nonempty parts 
$P = P_1 P_2$, in the $m-1$ possible ways. For each cut, they binary search 
for $P_1^{rev}$ in the sorted set $s(B_1)^{rev}, \ldots, s(B_{r'})^{rev}$ and 
for $P_2$ in the sorted set $s(C_1), \ldots, s(C_{r''})$. Let $[x_1,x_2]$ be 
the interval obtained for $P_1^{rev}$ and $[y_1,y_2]$ the one obtained for 
$P_2$. Then all the points in $[x_1,x_2] \times [y_1,y_2]$, for all the $m-1$ 
partitions of $P$, are the primary occurrences. These are tracked down the
wavelet tree, where the label $A$ of the rule $A \rightarrow B_iC_j$ is 
explicitly stored at the leaf position of the point $(i,j)$. We then know that 
$P$ appears in $s(A)[|s(B_i)|-|P_1|+1,|s(B_i)|+|P_2|]$.

\ejemplo{

\bigskip
\begin{example}
To search for $P=\texttt{bra}$ we try two partitions, $P_1=\texttt{b}$ with
$P_2=\texttt{ra}$, and $P_1=\texttt{br}$ with $P_2=\texttt{a}$. In the first
partition, we find for $P_1$ the range $[\texttt{ba},\texttt{ba}]$ in the
strings $s(B_i)^{rev}$, which after mapping it to the grid of 
Fig.~\ref{fig:wtree} becomes $[x_1,x_2]=[7,8]$. For $P_2$ we find the range 
$[\texttt{ra},\texttt{ra}]$ in the strings $s(C_j)$, which is
$[y_1,y_2]=[7,7]$ in Fig.~\ref{fig:wtree}. The wavelet tree search
then ends up in the rightmost leaf. If we explicitly store the strings
$A_{a,b}$ at the leaves, we may recover the nonterminal $\mathsf{E}$, which is
the only occurrence of $P$ under this partition $P_1 P_2$. The second
partition does not return results, thus nonterminal $\mathsf{E}$ contains the
only primary occurrence of $P$.
\end{example}

}

The special case $m=1$ is handled by binary searching the $(B_i)^{rev}$s or the
$(C_j)$s for the only nonterminal $A \rightarrow P[1]$. This, if exists, is 
the only primary occurrence of $P$.

To search for $P_1^{rev}$ or for $P_2$, the grammar is used to extract the
required substrings of $T$ in time $O(m + \log N)$, so the overall search time 
to find the $\nocc$ nonterminals containing the primary occurrences 
is $O(m\log r (m+\log N) + \log r \cdot 
\nocc)$. Let us describe the fastest known variant that uses $O(r\log N)$
bits, disregarding constant factors in the space. Within $O(r\log N)$ bits,
one can store Patricia trees \cite{Mor68} on the strings $s(B_i)^{rev}$
and $s(C_j)$, to speed up binary searches and reduce the time to
$O(m (m+\log N) + \log r \cdot nocc)$. Also, one can use the structure of
Gasieniec et al.\ \cite{GKPS05} that, within $O(r\log N)$ further bits, allows
extracting any prefix/suffix of any nonterminal in constant time per symbol
(see Claude and Navarro \cite{CN12} for more details). Since in our search we
only access prefixes/suffixes of whole nonterminals, this further reduces the
time to $O(m^2 + (m+nocc)\log r)$. Finally, we can use the technique for
faster reporting described in Section~\ref{sec:wt} to obtain time
$O(m^2 + m\log r + \log^\epsilon r \cdot nocc)$, for any constant $\epsilon>0$.

\paragraph*{Faster locating on balanced grammars}
If the grammar is balanced, however, we can do better within $O(r\log N)$
bits using the most recent developments. We can store z-fast tries 
\cite[App.~H.3]{BBPV18} on the sets $s(B_1)^{rev}, \ldots, s(B_{r'})^{rev}$ and 
$s(C_1), \ldots, s(C_{r''})$. We can also associate with each nonterminal $A$ a 
Karp-Rabin fingerprint \cite{KR87} for $s(A)$. If the balanced grammar is in 
Chomsky
normal form, then any substring of $T$ is covered by $O(\log N)$ maximal 
nonterminals, so its fingerprint can be assembled in time $O(\log N)$. 
Otherwise, we can convert it into Chomsky normal form while perserving its
asymptotic size and balancedness.%
\footnote{To convert a rule $A \rightarrow B_1\ldots B_t$ to Chomsky normal
form, instead of building a balanced binary tree of $t-1$ intermediate rules, 
use the tree corresponding to the Shannon codes \cite[Sec.~5.4]{CT06} of the 
probabilities $|s(B_i)|/|s(A)|$. Those guarantee that the leaf for each
$B_i$ is at depth $\left\lceil \lg \frac{|s(A)|}{|s(B_i)|}\right\rceil$. Any
root-to-leaf path of length $h$, when expanded by this process, telescopes to
$h+\lg N$.}
It is possible to build the fingerprints so as to ensure no 
collisions between substrings of $T$ \cite{GGKNP14}. We can also extract any
substring of length $m$ of $T$ in time $O(m+\log N)$, and even in time 
$O(m)$ if they are prefixes or suffixes of some $s(A)$ \cite{GKPS05}. With all 
those elements, we can build a scheme \cite[Lem.~5.2]{GNP18} that can find the
lexicographic ranges of 
the $m-1$ prefixes $P_1^{rev}$ in $s(B_1)^{rev}, \ldots, s(B_{r'})^{rev}$ and 
the $m-1$ suffixes $P_2$ in $s(C_1), \ldots, s(C_{r''})$, all in time
$O(m\log N)$. This reduces the time obtained in the preceding paragraph to
$O(m\log N + \log^\epsilon r \cdot \nocc)$, for any constant $\epsilon>0$. 
We will use this result for document listing, but it is of independent interest
as a grammar-based pattern-matching index. Note we have disregarded the
secondary occurrences, but those are found in $O(\log\log r)$ time each with
structures using $O(r\log N)$ bits \cite{CN12}.

\bigskip
\begin{theorem} \label{thm:locate}
Let text $T[1,N]$ be represented by a balanced grammar of size $r$. Then there 
is an index of $O(r\log N)$ bits that locates the $\nocc$ occurrences in 
$T$ of a pattern $P[1,m]$ in time $O(m\log N+\log^\epsilon r \cdot \nocc)$, 
for any constant $\epsilon>0$.
\end{theorem}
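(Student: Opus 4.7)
The plan is to instantiate the Claude--Navarro grammar-based index (Section~\ref{sec:index}) and replace each of its ``slow'' components by the fastest known counterpart available under the $O(r\log N)$-bit budget, as sketched in the paragraph immediately preceding the theorem. Concretely, I would first assume the input balanced grammar is in Chomsky normal form, converting it if necessary using the Shannon-code construction from the footnote, which preserves size up to constant factors and preserves balancedness. I would then build the two sorted sets $s(B_1)^{rev},\ldots,s(B_{r'})^{rev}$ and $s(C_1),\ldots,s(C_{r''})$, the wavelet tree on the induced grid of size $r$, the auxiliary $O((1/\epsilon) r \log r)$-bit structure that gives $O(\log^\epsilon r)$-time upward tracking (Section~\ref{sec:wt}), and the secondary-occurrence structure of \cite{CN12}, which uses $O(r\log N)$ bits and reports each secondary occurrence in $O(\log\log r)$ time.

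For the searching machinery, I would attach to the two sorted sets the z-fast tries of \cite{BBPV18}, store the Gasieniec et al.\ \cite{GKPS05} structure so that any prefix or suffix of any $s(A)$ can be extracted in constant time per symbol, and associate with every nonterminal $A$ a Karp--Rabin fingerprint of $s(A)$ chosen to be collision-free on substrings of $T$ as in \cite{GGKNP14}. Because the grammar is balanced and in Chomsky normal form, every substring of $T$ is covered by $O(\log N)$ maximal nonterminal occurrences, so its fingerprint can be assembled in $O(\log N)$ time. These ingredients are exactly those required by the scheme of \cite[Lem.~5.2]{GNP18}, which produces, for all $m-1$ cuts $P = P_1 P_2$ simultaneously, the lexicographic ranges $[x_1,x_2]$ of $P_1^{rev}$ in the $s(B_i)^{rev}$ and $[y_1,y_2]$ of $P_2$ in the $s(C_j)$, using $O(m\log N)$ total time.

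With the $m-1$ pairs of ranges in hand, I would issue the corresponding two-dimensional queries on the wavelet tree. Counting/descending into each relevant wavelet-tree node costs $O(\log r)$ per cut, absorbed in the $O(m\log N)$ search cost, while each of the $\nocc_p$ primary occurrences is reported in $O(\log^\epsilon r)$ time by the faster-reporting structure. Each primary occurrence, translated to a location in $T$ through $|s(B_i)|-|P_1|+1$, then triggers the secondary-occurrence traversal, which delivers the remaining occurrences at $O(\log\log r) = O(\log^\epsilon r)$ amortized time each. Adding the search cost and the reporting cost gives $O(m\log N + \log^\epsilon r \cdot \nocc)$, and summing the spaces of all components stays within $O(r\log N)$ bits, proving the theorem.

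The genuinely delicate step is the fingerprint-based lexicographic search of \cite[Lem.~5.2]{GNP18}: it requires that fingerprints of substrings be computable in $O(\log N)$ time and be collision-free on substrings of $T$, and that prefixes/suffixes of nonterminals be extractable in constant time per symbol. I expect the main obstacle to be verifying that the balanced-grammar assumption really gives $O(\log N)$-time fingerprint assembly (this is where Chomsky normal form plus balancedness, and hence the Shannon-code rewriting of the footnote, becomes essential), and that the $m-1$ searches can be amortized so that the total cost is $O(m\log N)$ rather than $O(m\log^2 N)$; once that is in place, the rest of the construction is a routine combination of known components.
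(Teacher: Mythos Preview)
Your proposal is correct and follows essentially the same route as the paper: you assemble exactly the components the paper invokes (Chomsky-normal-form conversion via Shannon codes, z-fast tries, collision-free Karp--Rabin fingerprints with $O(\log N)$-time assembly on balanced grammars, the Gasieniec et al.\ prefix/suffix extractor, the amortized search of \cite[Lem.~5.2]{GNP18}, the $O(\log^\epsilon r)$ wavelet-tree reporting, and the $O(\log\log r)$-time secondary occurrences of \cite{CN12}). Your identification of the delicate step---that the $m-1$ range searches must be amortized via the fingerprint machinery to $O(m\log N)$ rather than $O(m\log^2 N)$---is also exactly where the paper leans on \cite{GNP18}.
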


\paragraph*{Counting}
This index locates the occurrences of $P$ one by one, but cannot count them
without locating them all. This is a feature easily supported by 
suffix-array-based compressed indexes \cite{MNSV10,GNP18} in $O(m\log N)$ time 
or less, but so far unavailable in grammar-based or Lempel-Ziv-based compressed 
indexes.
In Theorem~\ref{thm:count} of Section~\ref{sec:count} we offer for the first
time efficient counting for grammar-based indexes. Within their same asymptotic
space, we can count in time $O(m^2 + m\log^{2+\epsilon} r)$ for any constant
$\epsilon>0$ (and $O(m(\log N+\log^{2+\epsilon} r))$ if the grammar is
balanced). For a text parsed into $z$ Lempel-Ziv phrases we obtain,
in Theorem~\ref{thm:countz}, 
$O(m\log^{2+\epsilon} (z\log(N/z)))$ time and $O(z\log(N/z)\log N)$ bits.

\paragraph*{Document listing}
The original structure was also unable to perform document listing without
locating all the occurrences and determining the document where each belongs.
Claude and Munro \cite{CM13} showed how to extend it in order to support 
document listing on a collection $\mathcal{D}$ of $D$ string documents, which 
are concatenated into a text $T[1,N]$. A grammar is built on $T$, where 
nonterminals are not allowed to cross document boundaries. To each nonterminal 
$A$ they associate the increasing list
$\ell(A)$ of the identifiers of the documents (integers in $[1,D]$) where $A$ 
appears. To perform document listing, they find all the primary occurrences 
$A \rightarrow BC$ of all the partitions of $P$, and merge their lists.%
\footnote{In pattern matching, the same nonterminal $A$ may be found several
times with different partitions $P=P_1 P_2$, and these yield different
occurrences. For document listing, however, we are only interested in the
nonterminal $A$.}
There is no useful worst-case time bound for this operation other than 
$O(r \cdot \ndoc)$.
To reduce space, they also grammar-compress the sequence of all
the $r$ lists $\ell(A)$. They give no worst-case space bound for the 
compressed lists (other than $O(rD\log D)$ bits). 

\ejemplo{

\bigskip
\begin{example}
The bottom of Fig.~\ref{fig:grammar} shows the lists $\ell(\cdot)$ associated
with the grammar nonterminals. After our search for $P=\texttt{bra}$ in the
previous example, which returned only the nonterminal $\mathsf{E}$, we may 
directly report $\ell(\mathsf{E})=1,2$, the list of all the documents where
$P$ appears.
\end{example}

}

At the end of Section~\ref{sec:space} we show that, under our repetitiveness
model, this index can be tweaked to occupy 
$O(n\log N + s\log^2 N)$ bits, close to what
can be expected from a grammar-based index according to our discussion. Still,
it gives no worst-case guarantees for the document listing time. In 
Theorem~\ref{thm:main} we show that, by multiplying the
space by an $O(\log D)$ factor, document listing is possible in time
$O(m\log^{1+\epsilon} N \cdot \ndoc)$ for any constant $\epsilon>0$.

\section{Our Document Listing Index} 

We build on the basic structure of Claude and Munro \cite{CM13}. 
Our main idea is to take advantage of the fact that the $\nocc$ primary 
occurrences to detect in Section~\ref{sec:index} are found as points in
the two-dimensional structure, along $O(\log r)$ ranges within wavelet tree
nodes (recall Section~\ref{sec:wt}) for each partition of $P$. Instead of 
retrieving the $\nocc$ individual lists, decompressing and merging them
\cite{CM13}, we will use the techniques to extract the distinct elements of 
a range seen in Section~\ref{sec:listing}. This will drastically reduce the
amount of merging necessary, and will provide useful upper bounds on the 
document listing time.

\subsection{Structure}

We store the grammar of $T$ in a way that it allows direct access for pattern 
searches, as well as the wavelet tree for the points $(i,j)$ of $A \rightarrow
B_i C_j$, the Patricia 
trees, and extraction of prefixes/suffixes of nonterminals, all in 
$O(r\log N)$ bits; recall Section~\ref{sec:index}.

Consider any sequence $S_{a,b}[1,q]$ at a wavelet tree node handling the range 
$[a,b]$ (recall that those sequences are not explicitly stored). Each element 
$S_{a,b}[k]=j$ corresponds to a point $(i,j)$ associated with a nonterminal 
$A_k \rightarrow B_iC_j$. Consider the sequence of associated labels
$A_{a,b}[1,q] = A_1,\ldots,A_q$ (not explicitly stored either).
Then let $L_{a,b} = \ell(A_1) \cdot \ell(A_2) \cdots
\ell (A_q)$ be the concatenation of the inverted lists associated with the
nonterminals of $A_{a,b}$, and let $M_{a,b}=1 0^{|\ell(A_1)|-1} 
1 0^{|\ell(A_2)|-1} \ldots 1 0^{|\ell(A_q)|-1}$ mark where each list begins
in $L_{a,b}$. Now let $E_{a,b}$ be the $E$-array corresponding to
$L_{a,b}$, as described in Section~\ref{sec:listing}. As in that section, we do
not store $L_{a,b}$ nor $E_{a,b}$, but just the RMQ structure on $E_{a,b}$, 
which together with $M_{a,b}$ will be used to retrieve the unique documents 
in a range $S_{a,b}[i,j]$. 

Since $M_{a,b}$ has only $r$ 1s out of (at most) $rD$ bits across all the 
wavelet tree nodes of the same level, it can be stored with $O(r\log D)$ bits 
per level \cite{OS07}, and $O(r\log r\log D)$
bits overall. On the other hand, as we will show, $E_{a,b}$ is formed by a few 
increasing runs, say $\rho$ across the wavelet tree nodes of the same level, 
and therefore we represent its RMQ structure using the technique of 
Section~\ref{sec:rmq}. The total space used by those RMQ structures is then 
$O(\rho\log r\log (rD/\rho))$ bits.

Finally, we store the explicit lists $\ell(A_k)$ aligned to the sequences
$A_{j,j}$ of the wavelet tree leaves $j$, so that the list of any element 
$A_{a,b}[k]$ is reached
in $O(\log r)$ time by tracking down the element. Those lists, of maximum total
length $rD$, are grammar-compressed as well, just as in the basic scheme
\cite{CM13}. If the grammar has $l$ rules, then the total compressed
size is $O(l\log(rD))$ bits to allow for direct access in $O(\log(rD))$ time,
see Section~\ref{sec:grammar}.

Our complete structure uses 
$O(r\log N + r\log r\log D +\rho\log r\log (rD/\rho) + l\log(rD))$ bits.

\ejemplo{

\begin{figure}[t]
\begin{center}
\includegraphics[width=0.5\textwidth]{node.pdf}
\end{center}
\caption{Our structure on a wavelet tree node. Only those in black are actually
stored.}
\label{fig:node}
\end{figure}

\bigskip
\begin{example}
Fig.~\ref{fig:node} shows our structure on a wavelet tree node, specifically
the left child of the root in Fig.~\ref{fig:wtree}. Only the bitvectors
$B_{a,b}$, $M_{a,b}$, and $F_{a,b}$ are stored explicitly, plus the RMQ 
structures on $E'_{a,b}$ (the headers of the nondecreasing runs in $E_{a,b}$). 
Bitvectors $B_{a,b}$ (plus the all structures associated with the grammar of 
$T$) contribute $O(r\log N)$ bits, bitvectors $M_{a,b}$ contribute 
$O(r\log r\log D)$ bits, and bitvectors $F_{a,b}$ and the RMQ structures on
$E'_{a,b}$ contribute $O(\rho \log r \log(rD/\rho))$ bits.
On the leaf nodes, the sequences $L_{a,b}$ are explicitly stored as well 
(in grammar-compressed form), contributing $O(l\log(rD))$ further bits.
\end{example}

}
\subsection{Document listing} \label{sec:dlist}

A document listing query proceeds as follows. We cut $P$ in the $m-1$ possible 
ways, and for each way identify the $O(\log r)$ wavelet tree nodes and ranges  
$A_{a,b}[i,j]$ where the desired nonterminals lie. Overall, we have 
$O(m\log r)$ ranges and need to take the union of the inverted lists of all 
the nonterminals in those ranges. We extract the distinct documents in each 
corresponding range $L_{a,b}[i',j']$ and then compute their union. If a range 
has only one element, we can simply track it to the leaves, where its list 
$\ell(A_k)$ is stored, and decompress the whole list. Otherwise, we use a more
sophisticated mechanism.

\ejemplo{

\bigskip
\begin{example} 
To have an interesting example, consider we search for $P=[\texttt{b-r}]\,
[\texttt{a-l}]$, where we admit ranges of symbols, in the index of 
Fig.~\ref{fig:grid}. Then we partition it into $P_1 = [\texttt{b-r}]$ and 
$P_2=[\texttt{a-l}]$. The searches, after mapping to the grid of 
Fig.~\ref{fig:wtree}, yield the range $[x_1,x_2]=[7,13]$ and $[y_1,y_2]=[1,6]$.
This is mapped to the wavelet tree nodes and ranges 
$A_{1,3}[3,7]$, $A_{4,5}[1,0]$ (empty
range), and $A_{6,6}[1,1]$. As seen in Fig.~\ref{fig:node}, the interval
of $A_{1,3}[3,7]$ corresponds to the concatenation of lists $L_{1,3}[5,13]=
1,1,2,3,2,3,3,1,2$ (obtained with rank and select on $M_{1,3}$). We will obtain
the distinct documents $1,2,3$ in this range. The range $A_{6,6}[1,1]$ contains
only the list $\ell(\mathsf{E''})=3$ (see Fig.~\ref{fig:grammar}). Thus we
find document $3$ twice, in different wavelet tree nodes. On longer patterns,
we can also find them repeated across different partitions $P_1 P_2$.
\end{example}

}

We use in principle the document listing technique of 
Section~\ref{sec:listing}. Let $A_{a,b}[i,j]$ be a range from where to
obtain the distinct documents. We compute $i'=\select_1(M_{a,b},i)$ and 
$j'=\select_1(M_{a,b},j+1)-1$, and obtain the distinct elements in 
$L_{a,b}[i',j']$, by using RMQs on $E_{a,b}[i',j']$. Recall that, as in
Section~\ref{sec:rmq}, we use a run-length compressed RMQ structure on 
$E_{a,b}$. With this arrangement, every RMQ operation takes time 
$O(\log\log(rD))$ plus the time to accesses two cells in $E_{a,b}$.
Those accesses are made to compare a run 
head with the leftmost element of the query interval, $E_{a,b}[i']$.
The problem is that we have not represented the cells of $E_{a,b}$, nor we can
easily compute them on the fly. 

Barbay et al.\ \cite[Thm.~3]{BFN12} give a representation that
determines the position of the minimum in $E_{a,b}[i',j']$ without the need to
perform the two accesses on $E_{a,b}$. They need 
$\rho\log(rD)+\rho\log(rD/\rho)+O(\rho)+o(rD)$ bits. The last term is,
unfortunately, is too high for us%
\footnote{Even if we get rid of the $o(rD)$ component, the $\rho\log(rD)$
term becomes $O(s\log^3 N)$ in the final space, which is larger than
what we manage to obtain. Also, using it does not make our solution faster.}.

Instead, we modify the way the distinct elements are obtained, so that
comparing the two cells of $E_{a,b}$ is unnecessary. In the same
spirit of Sadakane's solution (see Section~\ref{sec:listing}) we use a
bitvector $V[1,D]$ where we mark the documents already reported. Given a range
$A_{a,b}[i,j]$ (i.e., $L_{a,b}[i',j']=\ell(A_{a,b}[i])\cdots\ell(A_{a,b}[j])$),
we first track $A_{a,b}[i]$ down the wavelet tree, recover and decompress its 
list $\ell(A_{a,b}[i])$, and mark all of its documents in $V$. Note that all 
the documents in a list $\ell(\cdot)$ are different.  Now we do the same with 
$A_{a,b}[i+1]$, decompressing $\ell(A_{a,b}[i+1])$ left to right and marking 
the documents in $V$, and so on, until we decompress a document 
$\ell(A_{a,b}[i+d])[k]$ that is already marked in $V$. Only now we use the RMQ 
technique of Section~\ref{sec:rmq} on the interval $E_{a,b}[x,j']$, where 
$x=\select_1(M_{a,b},i+d)-1+k$, to obtain the next document to report. This
technique, as explained, yields two candidates: one is $E_{a,b}[x]$, where
$L_{a,b}[x] = \ell(A_{a,b}[i+d])[k]$ itself, and the other is some run head 
$E_{a,b}[k']$, where we can obtain $L_{a,b}[k']$ from the wavelet tree leaf 
(i.e., at $\ell(A_{a,b}[t])[u]$, where $t = \rank_1(M_{a,b},k')$ and 
$u=k'-\select_1(M,t)+1$).
But we know that $L_{a,b}[x]$ was already found twice and thus $E_{a,b}[x] \ge 
i'$, so we act as if the RMQ was always $E_{a,b}[k']$: If the correct RMQ 
answer was $E_{a,b}[x]$ then, since $i' \le E_{a,b}[x] \le E_{a,b}[k']$, we 
have that $L_{a,b}[k']$ is already reported and we will stop anyway. Hence, if 
$L_{a,b}[k']$ is already reported we stop, and otherwise we report it and 
continue recursively on the intervals $E_{a,b}[i',k'-1]$ and $E_{a,b}[k'+1,j']$.
On the first, we can continue directly, as we still know that $L_{a,b}[i']$ was
found twice. On the second interval, instead, we must restore the invariant 
that the leftmost element was found twice. So we find out with $M$ the list and
position of $L_{a,b}[k'+1]$, and traverse the list from that position onwards,
reporting documents until finding one that had already been reported.

If the RMQ algorithm does not return any second candidate $E_{a,b}[k']$ (which
happens when there are no run heads in $E_{a,b}[i'+1,j']$) we can simply stop,
since the minimum is $E_{a,b}[i']$ and $L_{a,b}[i']$ is already reported.
The correctness of this document listing algorithm is formally proved in
Appendix~\ref{sec:app}.

\ejemplo{

\bigskip
\begin{example} 
Consider again Example~\ref{ex:runlenRMQ}, whose array $L$ in 
Fig.~\ref{fig:muthu} corresponds to $L_{1,3}$ in Fig.~\ref{fig:wtree} and 
$L_{a,b}$ in Fig.~\ref{fig:node}. To solve $\textsc{rmq}_E(5,13)$, we found 
the minimum of the involved run heads, $E[7]$, and compared it with $E[5]$. 
This time,
however, we do not have access to $E = E_{a,b}$, and therefore cannot use the
same mechanism. Instead, we proceed as follows to find the distinct documents
in $L_{a,b}[5,13]$, corresponding to the lists in $A_{a,b}[3,7]$.
We track down $A_{a,b}[3]$ to determine it is $\mathsf{C}$, and report its
list of documents, $\ell(\mathsf{C})=1$. Now we track down $A_{a,b}[4]$ to
determine it is $\mathsf{D}$. As soon as we start traversing its list,
$\ell(\mathsf{D})=1,2,3$, we find the repeated document $L_{a,b}[6]=1$, so we 
stop and switch to computing $\textsc{rmq}_{E_{a,b}}(6,13)$. Since the RMQ
structure is built on the run heads of $E_{a,b}$, it can only tell that the
minimum is either $E_{a,b}[6]$ or $E_{a,b}[7]$. We then simply assume the 
minimum is $E_{a,b}[7]$ (which is true in this case). Thus, we track 
$L_{a,b}[7]$ down to the leaves, to find out it is document $2$, which we
report. Now we recurse on the two intervals, $E_{a,b}[6,6]$ and $E_{a,b}[8,13]$.
In the former we do not report anything because it contains no second candidate
apart from the already reported position $L_{a,b}[6]=1$. To process 
$E_{a,b}[8,13]$, instead, we must reestablish the invariant that the first
element has been found twice. So we track $L_{a,b}[8]$ to the leaves, finding
the new document $3$, which is reported. We continue with $L_{a,b}[9]$, which
turns out to be document $2$, already reported. Now we switch again to
computing RMQs: $\textsc{rmq}_{E_{a,b}}(9,13)$ tells that the minimum is either
$E_{a,b}[9]$ or $E_{a,b}[12]$. We simply assume it is $E_{a,b}[12]$, which is
correct in this case. But when we track $L_{a,b}[12]$ down to the leaves, we
find it is document $1$, which is already reported and thus we finish.
\end{example}

}

The $m-1$ searches for partitions of $P$ take time $O(m^2)$, as seen in
Section~\ref{sec:index}.
In the worst case, extracting each distinct document in the range requires an 
RMQ computation without access to $E_{a,b}$ ($O(\log\log(rD))$ time), tracking
an element down the wavelet tree ($O(\log r)$ time), and extracting an element 
from its grammar-compressed list $\ell(\cdot)$ ($O(\log(rD)$ time).
This adds up to $O(\log(rD))$ time per document 
extracted in a range.
In the worst case, however, the same documents are extracted over 
and over in all the $O(m\log r)$ ranges, and therefore the final search 
time is $O(m^2 + m\log r \log (rD) \cdot \ndoc)$.

\section{Analysis in a Repetitive Scenario}

Our structure uses 
$O(r\log N + r\log r\log D +\rho\log r\log (rD/\rho) + l\log(rD))$ 
bits, and performs document listing in time 
$O(m^2 + m\log r\log (rD)\cdot\ndoc)$.
We now specialize those formulas under our repetitiveness model. 
Note that our index works on any string collection; we use
the simplified model of the $D-1$ copies of a single document of length $n$,
plus the $s$ edits, to obtain analytical results that are easy to interpret in 
terms of repetitiveness. 

We also assume a particular strategy to generate the grammars in order
to show that it 
is possible to obtain the complexities we give. This involves determining the 
minimum number of edits that distinguishes each document from the previous one.
If the $s$ edit positions are not given 
explicitly, the optimal set of $s$ edits can still be obtained at construction 
time, with cost $O(Ns)$, using dynamic programming \cite{Ukk85}. 

\subsection{Space} \label{sec:space}

Consider the model where we have $s$ single-character edits affecting a {\em 
range} of document identifiers. This includes the model where each edit affects
a single document, as a special case.
The model where the documents form a tree of versions, and each edit affects 
a whole subtree, also boils down to the model of ranges by numbering the 
documents according to their preorder position in the tree of versions.

An edit that affects a range of documents $d_i,\ldots,d_j$ will be regarded as
two edits: one that applies the change at $d_i$ and one that undoes it at
$d_j$ (if needed, since the edit may be overriden by another later edit). Thus,
we will assume that there are at most $2s$ edits, each of which affects
all the documents starting from the one where it applies. We will then assume
$s \ge (D-1)/2$, since otherwise there will be identical documents, 
and this is easily reduced to a smaller collection with multiple identifiers 
per document. 

\paragraph*{Our grammar}

The documents are concatenated into a single text $T[1,N]$, where 
$N \le D(n+s)$. Our grammar for $T$ will be built over an alphabet of 
$O(N^{1/3})$ ``metasymbols'', which include all the possible strings of length
up to $\frac{1}{3}\log_\sigma N$. The first document is parsed into
$\lceil n/\frac{1}{3}\log_\sigma N\rceil$ metasymbols, on top of which we build
a perfectly balanced binary parse tree of height $h=\Theta(\log n)$ (for 
simplicity; any balanced grammar would do). All the 
internal nodes of this tree are distinct nonterminal symbols (unless they 
generate the same strings), and end up in a root symbol $S_1$.

Now we regard the subsequent documents one by one. For each new document $d$,
we start
by copying the parse tree from the previous one, $d-1$, including the start
symbol $S_d = S_{d-1}$. Then, we apply the edits that start at that document.
Let $h$ be the height of its parse tree. A character substitution requires 
replacing the metasymbol covering the position where the edit applies, and then
renaming the nonterminals $A_1,\ldots,A_h=S_d$ in the path from the parent of 
the metasymbol to the root. Each $A_i$ in the path is replaced by a new 
nonterminal $A_i'$ (but we reuse existing nonterminals to avoid duplicated
rules $A \rightarrow BC$ and $A' \rightarrow BC$). The nonterminals that do 
not belong to the path are not affected. A deletion proceeds similarly: we 
replace the metasymbol of length $k$ by one of length $k-1$ (for simplicity, 
we leave the metasymbol of length $0$, the
empty string, unchanged if it appears as a result of deletions). Finally, an
insertion into a metasymbol of length $k$ replaces it by one of length $k+1$,
unless $k$ was already the maximum metasymbol length, 
$\frac{1}{3}\log_\sigma N$. In this case we replace the metasymbol leaf by
an internal node with two leaves, which are metasymbols of length around
$\frac{1}{6}\log_\sigma N$. To maintain a balanced tree, we use the AVL 
insertion
mechanism, which may modify $O(h)$ nodes toward the root. This ensures that,
even in documents receiving $s$ insertions, the height of the parse tree will
be $O(\log(n+s))$.

The Chomsky normal form requires that we create nonterminals $A \rightarrow a$
for each metasymbol $a$ (which is treated as a single symbol); the first 
document creates $O(n/\log_\sigma N)$ nonterminals; and each edit
creates $O(\log(n+s))$ new nonterminals. Therefore, the final grammar size
is $r = \Theta(N^{1/3} + n/\log_\sigma N + s \log (n+s)) =
\Theta(n/\log_\sigma N + s\log N)$, where we used that either $n$ or $s$ is 
$\Omega(\sqrt{N})$ because $N \le D(n+s) \le (2s+1)(n+s)$.
Once all the edits are applied, we add a balanced tree on top of the $D$
symbols $S_d$, which asymptotically does not change $r$ (we may also avoid this
final tree and access the documents individually, since our accesses never cross
document borders).
Further, note that, since this grammar is balanced, Theorem~\ref{thm:locate}
allows us reduce its $O(m^2)$ term in the search time to $O(m\log N)$.

\paragraph*{Inverted lists}

Our model makes it particularly easy to bound $l$. 
Instead of grammar-compressing the lists, we store for each nonterminal a
plain inverted list encoded as a sequence of ranges of documents, as follows.
Initially, all the nonterminals that appear in the first document have a list 
formed by the single range $[1,D]$. Now we consider the documents $d$ one by 
one, with the invariant that a nonterminal appears in document $d-1$ iff the
last range of its list is of the form $[d',D]$. For each nonterminal that 
disappears in document $d$ (i.e., an edit removes its last occurrence), we 
replace the last range $[d',D]$ of its list by $[d',d-1]$. For each
nonterminal that (re)appears in document $d$, we add a new range $[d,D]$ to
its list. Overall, the total size of the inverted lists of all the
nonterminals is $O(r + s\log N)$, and each entry requires $O(\log D)$ bits.
Any element of the list is accessed with a predecessor query in $O(\log\log
D)$ time, faster than on the general scheme we described.

The use of metasymbols requires a special solution for patterns of length up to 
$\frac{1}{3}\log_\sigma N$, since some of their occurrences might not be found
crossing nonterminals. For all the $O(N^{1/3})$ possible patterns of up to that
length, we store the document listing answers explicitly, as inverted lists
encoding ranges of documents. These are created as for the nonterminals.
Initially, all the 
metasymbols that appear in the first document have a list formed by the single 
range $[1,D]$, whereas the others have an empty list. Now we consider the 
documents one by one. For each edit applied in document $d$, we consider each 
of the $O(\log^2_\sigma N)$ metasymbols of all possible lengths 
that the edit destroys. If this was the only occurrence of the metasymbol in
the document, we replace the last range $[d',D]$ of the list of the metasymbol
by $[d',d-1]$. Similarly, for each of the $O(\log^2_\sigma N)$ metasymbols of 
all possible lengths that the edit creates, if the metasymbol was not present
in the document, we add a new range $[d,D]$ to the list of the metasymbol.
Overall, the total size of the inverted lists of the metasymbols is
$O(N^{1/3} + s\log^2_\sigma N) \subseteq O(n+s\log^2_\sigma N)$, and each 
entry requires $O(\log D)$ bits.

\paragraph*{Run-length compressed arrays $E_{a,b}$}

Let us now bound $\rho$. When we have only the initial document, all 
the existing nonterminals mention document $1$, and thus $E=E_{a,b}$ has a 
single 
nondecreasing run. Now consider the moment where we include document $d$. We
will insert the value $d$ at the end of the lists of all the nonterminals $A$ 
that appear in document $d$. As long as document $d$ uses the same parse tree 
of document $d-1$, no new runs are created in $E$.

\bigskip
\begin{lemma}
If document $d$ uses the same nonterminals as document $d-1$, inserting it 
in the inverted lists does not create any new run in the $E$ arrays.
\end{lemma}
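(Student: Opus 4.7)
The plan is to establish a bijection between the nondecreasing run breaks of the old $E_{a,b}$ and those of the new one, from which the lemma follows. Let $\sigma$ denote the monotone embedding of old positions of $L_{a,b}$ into new positions, and let $p_1 < \cdots < p_s$ be the new positions that receive the appended value $d$ (one per nonterminal of this wavelet tree node that appeared in document $d-1$). Since $d$ strictly exceeds every prior document identifier, inserting copies of $d$ cannot alter the previous-occurrence pointer of any smaller value, so $E^{\text{new}}[\sigma(k)] = \sigma(E^{\text{old}}[k])$ under the convention $\sigma(0) = 0$, while the fresh entries themselves satisfy $E^{\text{new}}[p_1] = 0$ and $E^{\text{new}}[p_i] = p_{i-1}$ for $i \geq 2$.

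The key identity to extract is the following: letting $k_i$ be the old position of the last element of the $i$-th extended list, the hypothesis forces $L^{\text{old}}[k_i] = d-1$, hence $E^{\text{old}}[k_i] = k_{i-1}$ (with $k_0 := 0$), and by construction $\sigma(k_i) + 1 = p_i$. With these identities I would classify each pair of adjacent new positions $(j, j+1)$ as (A) two shifted-old positions, (B) a shifted-old position followed by a new $d$, or (C) a new $d$ followed by a shifted-old position. For type (A), monotonicity of $\sigma$ makes a new break coincide with an old break at $(k, k+1)$ where $k$ is not the end of an extended list. For type (B), the identity yields $E^{\text{new}}[\sigma(k_i)] = \sigma(k_{i-1}) \leq p_{i-1} - 1 < p_{i-1} = E^{\text{new}}[p_i]$ for $i \geq 2$ (and trivially $0 \leq 0$ for $i = 1$), so no break ever arises. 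For type (C), I would show that a break at $(p_i, \sigma(k_i+1))$ occurs iff $E^{\text{old}}[k_i+1] < k_{i-1}$, which is equivalent to an old break at $(k_i, k_i+1)$. The union of these correspondences is a bijection between old and new breaks, so the run count is preserved.

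The main obstacle I anticipate is the type (C) corner where $L^{\text{old}}[k_i+1] = d-1$. This happens precisely when the nonterminal $A_{j_i + 1}$ appeared for the first time in document $d-1$, so $\ell(A_{j_i+1}) = [d-1]$. One must verify that $E^{\text{old}}[k_i+1]$ then points to $k_i$ itself (the immediately preceding $d-1$), giving $E^{\text{old}}[k_i+1] = k_i > k_{i-1}$ and thus no spurious (C)-break. Once this corner is cleared, the bijection is routine, and the lemma follows because the number of nondecreasing runs equals the number of breaks plus one.
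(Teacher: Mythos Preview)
Your proposal is correct and follows essentially the same approach as the paper. Both arguments hinge on the same key identity: because the hypothesis forces every occurrence of $d$ at a new position $p_i$ to be immediately preceded by an occurrence of $d-1$ at $p_i-1$, one has $E^{\text{new}}[p_i-1]=p_{i-1}-1$ and $E^{\text{new}}[p_i]=p_{i-1}$, so no break arises at $(p_i-1,p_i)$, and the only way a break at $(p_i,p_i+1)$ could be ``new'' is if $E^{\text{new}}[p_i+1]=p_{i-1}-1$, which is impossible since $L^{\text{new}}[p_{i-1}-1]=d-1$ and the nearer copy of $d-1$ at $p_i-1$ would have been pointed to instead---precisely your type-(C) corner and the paper's closing contradiction. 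Your version simply makes the bookkeeping explicit via $\sigma$ and the (A)/(B)/(C) case split, whereas the paper argues directly in new-array coordinates and leaves the type-(A) preservation implicit; the substance is the same.
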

\begin{proof}
The positions $p_1,\ldots,p_k$ where we insert the document $d$ in the lists
of the nonterminals that appear in it, will be 
chained in a list where $E[p_{i+1}] = p_i$ and $E[p_1]=0$. Since all the 
nonterminals $A$ also appear in document $d-1$, the lists
will contain the value $d-1$ at positions $p_1-1,\ldots,p_k-1$, and we will
have $E[p_{i+1}-1]=p_i-1$ and $E[p_1-1]=0$. Therefore, the new values we
insert for $d$ will not create new runs: $E[p_1]=E[p_1-1]=0$ does not create a 
run, and neither can $E[p_{i+1}]=E[p_{i+1}-1]+1$, because if $E[p_{i+1}+1] < 
E[p_{i+1}]=p_i$, then we are only creating a new run if $E[p_{i+1}+1] = p_i-1$, 
but this cannot be since $E[p_{i+1}-1]=p_i-1=E[p_{i+1}+1]$ and in this case 
$E[p_{i+1}+1]$ should have pointed to $p_{i+1}-1$. 
\end{proof}

Now, each edit we apply on $d$ makes $O(\log N)$ 
nonterminals appear or disappear, and thus $O(\log N)$ values of $d$ appear
or disappear in $E$. Each such change may break a run. Therefore, $E$ may
have at most $\rho = O(s\log N)$ runs per wavelet tree level (all the lists
appear once in each level, in different orders).

\paragraph*{Total}

The total size of the index can then be expressed as follows. The 
$O(r\log r\log D)$ bits coming from the sparse bitvectors $M$, is
$O(r\log N\log D)$ (since $\log r = \Theta(\log(ns)) = \Theta(\log N)$), and 
thus it is $O(n\log\sigma \log D + s\lg^2 N\lg D)$. This subsumes the 
$O(r\log N)$ bits of the grammar and the wavelet tree. The inverted lists can 
be represented with $O((r+s\log N)\log D)$ bits, and the explicit answers for 
all the metasymbols require $O((n+s\log^2_\sigma N)\log D)$ bits. Finally, the
$O(\rho\log r\log(rD/\rho))$ bits of the structures $E$ are monotonically 
increasing with $\rho$, so since $\rho = O(s\log N) = O(r)$, we can upper 
bound it by replacing $\rho$ with $r$, obtaining $O(r\log r\log D)$
as in the space for $M$. Overall, the structures add up to
$O((n\log\sigma + s\log^2 N)\log D)$ bits. 

Note that we can also analyze the space required by Claude and Munro's 
structure \cite{CM13}. They only need the $O(r\log N)$ bits of the grammar
and the wavelet tree, which avoiding the use of metasymbols is
$O(n\log N+s\log^2 N)$ bits. Although smaller than ours almost by an 
$O(\log D)$ factor, their search time has no useful bounds.

\subsection{Time}

If $P$ does not appear in $\mathcal{D}$, we note it in time $O(m\log N)$, since
all the ranges are empty of points. Otherwise, our search time is
$O(m\log N + m\log r\log(rD) \cdot \ndoc) = O(m\log^2 N \cdot \ndoc)$.
The $O(\log(rD))$ cost corresponds to accessing a list $\ell(A)$ from the
wavelet tree, and includes the $O(\log r)$ time to reach the leaf and the
$O(\log D)$ time to access a position in the grammar-compressed list. Since we
have replaced the grammar-compressed lists by a sequence of ranges, this last
cost is now just $O(\log\log D) \subseteq O(\log\log r)$. 
As seen in Section~\ref{sec:wt}, it is possible to 
reduce the $O(\log r)$ tracking time to $O((1/\epsilon)\log^\epsilon r)$ 
for any $\epsilon>0$, within $O((1/\epsilon)r\log N)$ bits. In this case,
the lists $\ell(A)$ are associated with the symbols at the root of the wavelet
tree, not the leaves.

\bigskip
\begin{theorem} \label{thm:main}
Let collection $\mathcal{D}$, of total size $N$, be formed by an initial 
document of length $n$ plus $D-1$ copies of it, with $s$ single-character
edit operations performed on ranges or subtrees of copies. Then 
$\mathcal{D}$ can be represented within $O((n\log\sigma +s\log^2 N)\log D)$ 
bits, so that the $\ndoc>0$ documents where a pattern of length $m$ appears can 
be listed in time $O(m\log^{1+\epsilon} N \cdot \ndoc)$, for
any constant $\epsilon>0$. If the pattern does not appear in $\mathcal{D}$, we
determine this is the case in time $O(m\log N)$.
\end{theorem}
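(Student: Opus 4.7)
The plan is to combine the grammar construction of Section~\ref{sec:space} with the wavelet-tree/RMQ index of Section~\ref{sec:dlist}, then tune a couple of per-document costs to meet the target bounds. First I would build the balanced grammar described there: use metasymbols of length up to $\frac{1}{3}\log_\sigma N$ to keep the base layer at $O(n/\log_\sigma N)$ nonterminals, and apply AVL-style rebalancing along edit paths so that each document's parse tree remains balanced, yielding $r=\Theta(n/\log_\sigma N + s\log N)$. On top of it I would mount the wavelet tree of Section~\ref{sec:index} for the rules $A\rightarrow B_iC_j$, using the $O((1/\epsilon)r\log N)$-bit faster-tracking variant of Section~\ref{sec:wt}, the run-length compressed RMQ structures of Section~\ref{sec:rmq} on the $E_{a,b}$ arrays, the range-encoded inverted lists for the nonterminals, and precomputed inverted lists for every pattern of length at most $\frac{1}{3}\log_\sigma N$ (handling the case where $m$ is so small that a primary occurrence may fit inside a single metasymbol).

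For the space I would simply collect the per-component bounds already derived in Section~\ref{sec:space}. The dominant term comes from the sparse bitvectors $M_{a,b}$, which amount to $O(r\log r\log D) = O((n\log\sigma + s\log^2 N)\log D)$ bits. The remaining contributions---the grammar and wavelet tree, the RMQ structures once $\rho = O(s\log N) = O(r)$ is established, the range-encoded lists, the short-pattern tables, and the extra space enabling faster tracking---are all absorbed into this bound, matching the claimed $O((n\log\sigma + s\log^2 N)\log D)$.

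For the time I would invoke Theorem~\ref{thm:locate} on the balanced grammar to compute, in $O(m\log N)$ time, the lexicographic ranges on the grid for all $m-1$ partitions of $P$; reading off the $O(m\log r) = O(m\log N)$ wavelet-tree node ranges that cover them stays within the same budget. If every such range is empty we report ``no occurrence'' in $O(m\log N)$. Otherwise I would run the modified Sadakane-style procedure of Section~\ref{sec:dlist} on those ranges: each newly discovered document costs $O(\log\log(rD))$ for an RMQ on the run-length structure, $O(\log^\epsilon r)$ to track the chosen point to its root position (where its list is located), and $O(\log\log D)$ for a predecessor query on the range-encoded inverted list, totalling $O(\log^\epsilon r)$. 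Because the same document can be rediscovered in each of the $O(m\log r)$ ranges before the global bitvector $V$ filters it out, the amortized cost per distinct reported document is $O(m\log r \cdot \log^\epsilon r) = O(m\log^{1+\epsilon} N)$, giving the claimed $O(m\log^{1+\epsilon} N\cdot\ndoc)$.

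The main obstacle I foresee is establishing the run bound $\rho = O(s\log N)$ cleanly: one must show via the ``reused parse tree'' lemma that appending a document whose parse tree matches its predecessor's creates no new runs in any $E_{a,b}$, and then charge each edit with the $O(\log N)$ nonterminals it causes to appear or disappear, each of which can create only a constant number of new runs at any wavelet-tree level. Once this is in place the RMQ space folds into the dominating $O(r\log r\log D)$ term, and the subtle point that the listing procedure of Section~\ref{sec:dlist} never physically inspects $E_{a,b}$ is handled by the correctness argument formalized in Lemma~\ref{lem:alg}.
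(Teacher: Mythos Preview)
Your proposal is correct and follows essentially the same approach as the paper: you assemble the balanced metasymbol grammar of Section~\ref{sec:space}, the wavelet-tree/RMQ machinery of Section~\ref{sec:dlist} with range-encoded lists and fast upward tracking, bound $\rho$ via the reused-parse-tree lemma, and then combine Theorem~\ref{thm:locate} with the per-range listing cost exactly as in Section~5.2. The only cosmetic difference is that the paper first states the unoptimized $O(m\log^2 N\cdot\ndoc)$ bound and then shaves it, whereas you go straight to the $O(\log^\epsilon r)$ per-document cost.
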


We can also obtain other tradeoffs. For example, with $\epsilon = 1/\lg\lg r$
we obtain $O((n\log\sigma +s\log^2 N)(\log D+\lg\lg N))$ bits of space and
$O(m\log N \log\log N \cdot \ndoc)$ search time.

\section{Counting Pattern Occurrences} \label{sec:count}

Our idea of associating augmented information 
with the wavelet tree of the grammar has
independent interest. We illustrate this by developing a variant where
we can count the number of times a pattern $P$ occurs in the text without
having to enumerate all the occurrences, as is the case with all the 
grammar-based indexes \cite{CNfi10,CN12,CE18}. In these structures, the primary
ocurrences are found as points in various ranges of a grid (recall
Section~\ref{sec:index}). Each primary
occurrence then triggers a number of secondary occurrences, disjoint from those
triggered by other primary occurrences. These secondary occurrences depend only
on the point: if $P$ occurs when $B$ and $C$ are concatenated in the rule
$A \rightarrow BC$, then every other occurrence of $A$ or of its ancestors
in the parse tree produces a distinct secondary occurrence. Even if the same
rule $A \rightarrow BC$ is found again for another partition $P=P_1P_2$, the 
occurrences are different because they have different offsets inside $s(A)$.

We can therefore associate with each point the number of secondary occurrences
it produces, and thus the total number of occurrences of $P$ is the sum of the
numbers associated with the points contained in all the ranges. By 
augmenting the wavelet tree (recall Section~\ref{sec:wt}) of the grid,
the sum in each range can be computed in time $O(\log^3 r)$, using $O(r\log N)$
further bits of space for the grid \cite[Thm.~6]{NNR13}.%
\footnote{Although the theorem states that it must be
$t \ge 1$, it turns out that one can use $t=\log r / \log N$ (i.e., $\tau =
\log r$) to obtain this tradeoff (our $r$ is their $n$ and our $N$ is their
$W$).} We now show how this result can be improved to time $O(\log^{2+\epsilon}
r)$ for any constant $\epsilon>0$. Instead of only sums, we consider the more 
general case of a finite group \cite{NNR13}, so our particular case is 
$([0,N],+,-,0)$.

\bigskip
\begin{theorem} \label{thm:sum}
Let a grid of size $r \times r$ store $r$ points with associated values
in a group $(G,\oplus,^{-1},0)$ of $N=|G|$ elements. For any $\epsilon>0$, a
structure of $O((1/\epsilon)r\log N)$ bits can compute the sum 
$\oplus$ of the values in any rectangular range in time 
$O((1/\epsilon)\log^{2+\epsilon} r)$.
\end{theorem}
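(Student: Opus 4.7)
The plan is to follow the wavelet-tree summary framework of Navarro et al.\ that already achieves $O(\log^3 r)$ in the same asymptotic space, and to accelerate only its bottleneck by a factor of $\log^{1-\epsilon} r$. In that framework a query decomposes into $O(\log r)$ canonical ranges on the implicit arrays $A_{a,b}$ along the wavelet tree; on each range one obtains the required sum by taking two sampled partial sums (every $\tau$ positions) and folding in $O(\tau)$ residual values, where $\tau=\log r$ keeps the sample space at $O(r\log N)$ bits. In the original version each residual value is extracted by a root-to-leaf traversal of cost $O(\log r)$, so the per-range cost is $O(\log^2 r)$ and the total is $O(\log^3 r)$. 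Replacing that $O(\log r)$ extraction with the $O((1/\epsilon)\log^\epsilon r)$ upward-tracking primitive from Section~\ref{sec:wt} brings the per-range cost to $O((1/\epsilon)\log^{1+\epsilon} r)$ and the total to $O((1/\epsilon)\log^{2+\epsilon} r)$.

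Concretely I would store: (i)~the ordinary wavelet tree on the grid in $O(r\log r)$ bits; (ii)~the values $v_1,\ldots,v_r$ indexed by their root $x$-position, in $O(r\log N)$ bits; (iii)~for every internal node $(a,b)$, partial sums of $A_{a,b}$ sampled every $\tau=\log r$ positions, contributing $O((r/\tau)\log r\log N)=O(r\log N)$ bits across all levels; and (iv)~the fast upward-tracking structure from Section~\ref{sec:wt}, which with $O((1/\epsilon)r\log r)$ additional bits maps any position $p$ at any internal node $(a,b)$ to its root $x$-coordinate $p'$ in $O((1/\epsilon)\log^\epsilon r)$ time. Everything sums to $O((1/\epsilon)r\log N)$ bits.

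For a query $[x_1,x_2]\times[y_1,y_2]$, descend the wavelet tree in the usual way to produce the $O(\log r)$ canonical nodes whose $y$-range is contained in $[y_1,y_2]$, each carrying a projected $x$-interval $[i,j]$. At each such node $(a,b)$, compute the required sum as $P^{-1}\oplus Q$, where $P$ and $Q$ are assembled from the nearest sampled partial sums at or before $i-1$ and $j$ by folding in at most $\tau-1$ residual values on each side; each residual value at a position $p$ is obtained by upward-tracking $(a,b,p)$ to its root coordinate $p'$ and returning $v_{p'}$. Each residual costs $O((1/\epsilon)\log^\epsilon r)$, so a canonical node costs $O((1/\epsilon)\log^{1+\epsilon} r)$, yielding $O((1/\epsilon)\log^{2+\epsilon} r)$ overall; the $O(\log r)$ cost of the descent itself is absorbed.

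The main obstacle I anticipate is verifying that the upward-tracking primitive applies cleanly from an arbitrary internal node rather than only from a leaf: the excerpt phrases it as ``reaching the root position of a symbol,'' which is most naturally understood from a leaf during reporting. I expect the underlying construction (a multiary wavelet tree with word-packed jumps, or a Chazelle-style jump structure) to allow the same $O((1/\epsilon)\log^\epsilon r)$ upward jump from any level within $O((1/\epsilon)r\log r)$ bits, but this is where the argument is not entirely mechanical and must be checked; if the published form gives only leaf-to-root jumps, adding ancestor tables or partial-descent caches should suffice, provided the space accounting is re-verified. A minor point is the non-abelian case, which is handled by keeping samples in natural left-to-right order and using $P^{-1}\oplus Q$ with the group inverse supplied by the hypothesis, so no appeal to commutativity is needed.
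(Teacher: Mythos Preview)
Your proposal is correct and is essentially identical to the paper's own proof: the paper also samples the cumulative sums of $A_{a,b}$ every $\log r$ positions, stores $A_{1,r}$ explicitly at the root, and obtains each of the $O(\log r)$ residual values per canonical node by tracking the position upward to the root in $O((1/\epsilon)\log^\epsilon r)$ time via the $O((1/\epsilon)r\log r)$-bit structure of Section~\ref{sec:wt}. Regarding your one anticipated obstacle, the paper does not dwell on it either; it simply invokes the cited upward-tracking constructions, which indeed operate level by level and thus work from any internal node, not only from leaves.
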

\begin{proof}
We modify the proof Navarro et al.~\cite[Thm.~6]{NNR13}.
They consider, for the sequence $S_{a,b}$ of each wavelet tree node, the 
sequence of associated values $A_{a,b}$. They store a cumulative array
$P_{a,b}[0] = 0$ and $P_{a,b}[i+1] = P_{a,b}[i] \oplus A_{a,b}[i+1]$, so
that any range sum $\oplus_{i \le k \le j} A_{a,b}[k] = P_{a,b}[j] \oplus 
P_{a,b}[i-1]^{-1}$ is computed in constant time. The space to store $P_{a,b}$ 
across all the levels is $O(r \log r \log N)$ bits. To reduce it to
$O(r\log N)$, they store instead the cumulative sums of a sampled array 
$A'_{a,b}$, where $A'_{a,b}[i] = \oplus_{(i-1)\log r<k \le i\log r} A_{a,b}[k]$.
They can then compute any range sum over $A'_{a,b}$, with which they can 
compute any range sum over $A_{a,b}$ except for up to $\log r$ elements in 
each extreme. Each of 
those extreme elements can be tracked up to the root in time $O((1/\epsilon)
\log^\epsilon r)$, for any $\epsilon>0$, using $O((1/\epsilon)r\log r)$ bits,
as described at the end of Section~\ref{sec:wt}. The root sequence $A_{1,r}$ 
is stored explicitly, in $r\log N$ bits. Therefore, we 
can sum the values in any range of any wavelet tree node in time 
$O((1/\epsilon)\log^{1+\epsilon} r)$. Since any two-dimensional range is 
decomposed into $O(\log r)$ wavelet tree ranges, we can find the sum in time 
$O((1/\epsilon)\log^{2+\epsilon} r)$.
\end{proof}

This immediately yields the first grammar-compressed 
index able to count pattern occurrences without locating them one by one.

\bigskip
\begin{theorem} \label{thm:count}
Let text $T[1,N]$ be represented by a grammar of size $r$. Then there exists
an index of $O(r\log N)$ bits that can count the number of occurrences of
a pattern $P[1,m]$ in $T$ in time $O(m^2+m\log^{2+\epsilon} r)$, for any
constant $\epsilon>0$. If the grammar is balanced, the time can be made
$O(m(\log N+\log^{2+\epsilon} r))$.
\end{theorem}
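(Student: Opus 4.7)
The plan is to augment the wavelet tree of the grammar-based index of Section~\ref{sec:index} with the range-sum capability provided by Theorem~\ref{thm:sum}. To each point $(i,j)$ of the grid, labeled by the nonterminal $A$ with rule $A \rightarrow B_i C_j$, I would attach a precomputed value $c(A)$ equal to the number of times $A$ appears as a node in the parse tree of $T$. These multiplicities are easily computed at construction time by a single top-down traversal of the grammar DAG: initialize $c(S)=1$ at the start symbol and, for each rule $B \rightarrow XY$, add $c(B)$ to $c(X)$ and to $c(Y)$. Since $c(A) \le N$, each value fits in $O(\log N)$ bits.

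Next I would establish the counting identity that justifies summing labels. A textual occurrence of $P$ in $T$ is a primary occurrence at the unique lowest parse-tree node labeled by some $A$ whose rule $A \rightarrow BC$ splits it, for a unique partition $P=P_1P_2$ with $P_1$ a suffix of $s(B)$ and $P_2$ a prefix of $s(C)$. Conversely, for a fixed grid point $(i,j)$ labeled $A$ lying in the range corresponding to partition $P_1P_2$, each of the $c(A)$ parse-tree positions of $A$ yields a distinct occurrence of $P$ in $T$. Different points clearly yield different occurrences, and if the same point happens to fall in the ranges of two different partitions, the resulting occurrences are still distinct because $|P_1|$ differs and so does the offset of $P$ inside $s(A)$. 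Hence the total occurrence count is, summed over the $m-1$ partitions and over all points in the corresponding grid ranges $[x_1,x_2]\times[y_1,y_2]$, the value $c(A)$.

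I would then invoke Theorem~\ref{thm:sum} on the wavelet tree of the grid, equipped with the labels $c(A)$ under the additive group of integers restricted to $[0,N]$. This adds $O((1/\epsilon)\,r \log N)$ bits on top of the grammar-based index and computes the sum over any rectangular range in time $O((1/\epsilon)\log^{2+\epsilon} r)$. For a general grammar, the $m-1$ range searches for $P_1^{rev}$ in $s(B_1)^{rev},\ldots,s(B_{r'})^{rev}$ and for $P_2$ in $s(C_1),\ldots,s(C_{r''})$ cost $O(m^2)$ overall via the Patricia trees plus the prefix/suffix extraction of Gasieniec et al.\ already described in Section~\ref{sec:index} for the locating algorithm. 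Adding the $O(m\log^{2+\epsilon} r)$ cost of the $m-1$ range sums yields the claimed time $O(m^2 + m\log^{2+\epsilon} r)$, all within $O(r\log N)$ bits. For a balanced grammar, I would replace the Patricia trees by the z-fast tries plus Karp-Rabin fingerprint scheme of Section~\ref{sec:index}, producing all $m-1$ grid ranges in total time $O(m\log N)$; the summation cost is unchanged, giving $O(m(\log N + \log^{2+\epsilon} r))$.

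The main obstacle, and the only part of the argument that is not a black-box application of Theorem~\ref{thm:sum} and the existing locating machinery, is the counting identity itself. One must verify both that every textual occurrence of $P$ is counted exactly once (uniqueness of the lowest splitting ancestor in the parse tree, together with uniqueness of the induced partition) and that no double-counting arises when the same grid point lies in the ranges of two different partitions. As sketched above, the second case is in fact correct precisely because distinct partitions place $P$ at distinct offsets within $s(A)$; making this bookkeeping airtight is where I would invest the most care.
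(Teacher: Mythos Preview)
Your proposal is correct and follows essentially the same approach as the paper: attach to each grid point the multiplicity $c(A)$ of $A$ in the parse tree, invoke Theorem~\ref{thm:sum} to sum these over each of the $m-1$ query rectangles, and combine with the existing $O(m^2)$ (resp.\ $O(m\log N)$ for balanced grammars) range-finding machinery of Section~\ref{sec:index}. Your discussion of the counting identity, including the double-counting check across partitions via differing offsets in $s(A)$, matches the paper's justification almost verbatim.
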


Further, since we can produce a balanced grammar of size $r = O(z\log(N/z))$
for a text of length $N$ with a Lempel-Ziv parse of size $z$
\cite{Ryt03,CLLPPSS05,Sak05,Jez15,Jez16}, we also obtain a data structure
whose size is bounded by $z$.

\bigskip
\begin{theorem} \label{thm:countz}
Let text $T[1,N]$ be parsed into $z$ Lempel-Ziv phrases. Then there exists
an index of $O(z\log(N/z)\log N)$ bits that can count the number of occurrences
of a pattern $P[1,m]$ in $T$ in time
$O(m\log N+m\log^{2+\epsilon} (z\log(N/z))) = O(m\log^{2+\epsilon} N)$, for any
constant $\epsilon>0$.
\end{theorem}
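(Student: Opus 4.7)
The plan is to obtain Theorem~\ref{thm:countz} as an essentially immediate corollary of Theorem~\ref{thm:count}, once we replace the Lempel-Ziv parse by a suitable balanced grammar. The paper already observes (in Section~\ref{sec:grammar} and again right before the statement) that from any Lempel-Ziv parse of $T[1,N]$ into $z$ phrases one can construct, in a standard way \cite{Ryt03,CLLPPSS05,Sak05,Jez15,Jez16}, a balanced grammar in Chomsky normal form of size $r = O(z\log(N/z))$ that generates exactly $T$. I would start the proof by invoking this construction and fixing such a grammar.

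Next I would apply the second (balanced) branch of Theorem~\ref{thm:count} to this grammar. That theorem gives, for a balanced grammar of size $r$, an index of $O(r\log N)$ bits that counts the occurrences of $P[1,m]$ in time $O(m(\log N+\log^{2+\epsilon} r))$ for any constant $\epsilon>0$. Substituting $r=O(z\log(N/z))$ directly yields a space bound of $O(z\log(N/z)\log N)$ bits and a time bound of $O\!\left(m\log N + m\log^{2+\epsilon}(z\log(N/z))\right)$, which is exactly the statement.

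The only remaining item is to justify the final simplification $O(m\log^{2+\epsilon} N)$. This follows because $z\log(N/z) \le N$ (indeed the function $z \mapsto z\log(N/z)$ is maximized around $z=N/e$ and is always $O(N)$), so $\log(z\log(N/z)) = O(\log N)$, and the $m\log N$ term is absorbed into $m\log^{2+\epsilon} N$.

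There is no real obstacle here: the work of Theorem~\ref{thm:count}, together with the existing Lempel-Ziv-to-balanced-grammar transformation, does all the lifting. The only point one must be a bit careful about is that we use the \emph{balanced} case of Theorem~\ref{thm:count}, since the unbalanced bound $O(m^2 + m\log^{2+\epsilon} r)$ would give a worse $m^2$ term; this is precisely why we need the Rytter/CLLPPSS construction rather than an arbitrary grammar of size $O(z\log(N/z))$.
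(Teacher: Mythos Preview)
Your proposal is correct and matches the paper's own argument: the paper simply observes that a balanced grammar of size $r=O(z\log(N/z))$ can be built from the Lempel-Ziv parse and then applies the balanced case of Theorem~\ref{thm:count}, exactly as you do. Your added justification for the $O(m\log^{2+\epsilon} N)$ simplification is even slightly more detailed than what the paper provides.
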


%

\section{Conclusions} \label{sec:conclusions}

We have presented the first document listing index with worst-case space and
time guarantees that are useful for repetitive collections. On a collection of
size $N$ formed by an initial document of length $n$ and $D-1$ copies it, with
$s$ single-character edits applied on individual documents, or ranges 
of documents (when there is a linear structure of versions), or subtrees of 
documents (when there is a hierarchical structure of versions),
our index uses $O((n\log\sigma +s\log^2 N)\log D)$ bits and lists the 
$\ndoc>0$ documents where a pattern of length $m$ appears in time 
$O(m \log^{1+\epsilon} N \cdot \ndoc)$, for any constant
$\epsilon>0$. We also prove that a previous index that had not been analyzed 
\cite{CM13}, but which has no useful worst-case time bounds for listing,
uses $O(n\log N +s\log^2 N)$ bits. As a byproduct, we offer a new variant
of a structure that finds the distinct values in an array range 
\cite{Mut02,Sad07}.

The general technique we use, of
augmenting the range search data structure used by grammar-based indexes, can 
be used for other kind of summarization queries. We illustrate this by 
providing the first grammar-based index
that uses $O(r \log N)$ bits, where $r$ is the size of a grammar that 
generates the text, and counts the number of occurrences of a pattern in time
$O(m^2+m\log^{2+\epsilon} r))$, for any constant $\epsilon>0$ (and
$O(m(\log N + \log^{2+\epsilon} r))$ if the grammar is balanced). We also
obtain the first Lempel-Ziv based index able of counting: if the text is
parsed into $z$ Lempel-Ziv phrases, then our index uses $O(z\log(N/z)\log N)$
bits and counts in time $O(m\log^{2+\epsilon} N)$. As a byproduct, we improve 
a previous result \cite{NNR13} on summing values over two-dimensional point 
ranges.

\paragraph*{Future work}

The space of our document listing index is an $O(\log D)$ factor away from 
what can be
expected from a grammar-based index. An important question is whether this
space factor can be removed or reduced while retaining worst-case time
guarantees for document listing. The analogous challenge in time is whether
we can get a time closer to the $\tilde O(m+\ndoc)$ that is obtained with
statistically-compressed indexes, instead of our $\tilde O(m \cdot \ndoc)$.

Another interesting question is whether there exists an index whose space and 
time can be bounded in terms of more general 
repetitiveness measures of the collection, for example in terms of the size $r$
of a grammar that represents the text, as is the case of grammar-based
pattern matching indexes that list all the occurrences of a pattern
\cite{CNfi10,CN12,CE18}. In particular, it would be interesting to handle
block edits, where a whole block of text is inserted, deleted, copied, or
moved. Such operations add only $O(\log N)$ nonterminals to a grammar, or
$O(1)$ phrases to a Lempel-Ziv parse, whereas our index can grow arbitrarily.

Yet another question is whether we can apply the idea of augmenting 
two-dimensional data structures in order to handle other kinds of summarization 
queries that are of interest in pattern matching and document retrieval 
\cite{Nav14}, for example counting the number of distinct documents where the
pattern appears, or retrieving the $k$ most important of those documents, or
retrieving the occurrences that are in a range of documents.


\appendix

\section{Proof of Correctness}
\label{sec:app}

We prove that our new document listing algorithm is correct. We first consider
a ``leftist'' algorithm that
proceeds as follows to find the distinct elements in $L[sp,ep]$.
It starts recursively with $[i,j] = [sp,ep]$ and remembers the documents that
have already been reported, globally. To process interval $[i,j]$, it reports
$L[i], L[i+1], \ldots$ until finding an already reported element at $L[d]$. 
Then it finds the minimum $E[k]$ in $E[d,j]$. If $L[k]$ had been reported
already, it stops; otherwise it reports $L[k]$ and proceeds recursively in 
$L[d,k-1]$ and $L[k+1,j]$, in this order. Our actual algorithm is a slight
variant of this procedure, and its correctness is established at the end.

\bigskip
\begin{lemma} \label{lem:alg}
The leftist algorithm reports the $\ndoc$ distinct elements in $L[sp,ep]$ in
$O(\ndoc)$ steps.
\end{lemma}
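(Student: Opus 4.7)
The plan is to prove correctness by induction on the length of the range processed, maintaining the invariant that at the entry to each call $\mathrm{process}(i,j)$ the set $V$ of already reported documents contains every value appearing in $L[sp,i-1]$. Under this invariant the forward scan trivially extends $V$ so that it covers $L[sp,d-1]$ (or all of $L[sp,j]$ if the scan reaches $j+1$, in which case we are done). If the RMQ over $E[d,j]$ returns a position $k$ with $L[k]\notin V$, we report $L[k]$ and recurse on $[d,k-1]$ and then on $[k+1,j]$: the first recursive call inherits the invariant immediately, by induction it returns with $V$ covering $L[sp,k-1]$, and together with the just-reported $L[k]$ this establishes the invariant $V\supseteq L[sp,k]$ that the second call needs. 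The delicate case is the stopping rule: when $L[k]\in V$ we must argue that every document in $L[d,j]$ already lies in $V$.

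For this I would prove the auxiliary claim that whenever the current RMQ returns $k$ with $L[k]\in V$, one in fact has $E[k]\ge sp$. Granting this claim, a standard $E$-chain argument finishes the job: for any $m\in[d,j]$, iterate $m,\,E[m],\,E[E[m]],\ldots$; each step is strictly decreasing, and by the argmin property the chain cannot fall below $E[k]\ge sp$ as long as it stays inside $[d,j]$, so it must exit the interval at a position in $[sp,d-1]$, whose value has been reported by an earlier scan, proving $L[m]\in V$. To prove $E[k]\ge sp$, I would trace where the report witnessing $L[k]\in V$ could have come from: some position $m'$ with $L[m']=L[k]$ visited by the current call's scan, by an ancestor's scan or RMQ, or by a previously-processed left-sibling subtree. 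A case analysis on the preorder DFS traversal rules out $m'=k$ (position $k$ is being examined right now for the first time) and $m'>k$: such an $m'$ cannot lie in the unvisited part of the current range; it cannot lie in an ancestor's scan prefix or in a left-sibling subtree (in both cases the interval sits to the left of $k$ by the nesting of the recursion intervals); and if $m'$ equalled an ancestor's RMQ point $k_A>k$, then $k$ lies in the range on which that RMQ was taken, and the argmin that selected $k_A$ forces $E[k]\ge E[k_A]\ge k\ge sp$, contradicting $E[k]<sp$. Hence $m'<k$ and $E[k]\ge m'\ge sp$.

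The hardest part of the proof is exactly this ancestor-RMQ subcase: one has to combine the argmin property of the ancestor's RMQ with the observation that $k$ lies in the range over which that RMQ was taken in order to derive the contradiction. The time bound is then a clean counting argument on the recursion tree. Only calls whose RMQ reports a fresh document produce children, and such a call produces exactly two; hence, writing $c$ for the number of such calls, the whole tree has $2c+1$ nodes. Each call does a scan whose steps equal the number of fresh documents it reports plus at most one duplicate check, and at most one constant-cost RMQ. Summing over the tree gives total work $O\!\left(\sum_j s_j+c\right)=O(\ndoc)$, because the scan reports plus the RMQ reports exactly account for the $\ndoc$ distinct documents.
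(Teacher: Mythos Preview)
Your proposal is correct and follows essentially the same inductive skeleton as the paper: induct on the interval length, maintain the invariant that upon entering a call on $[i,j]$ the set $V$ already covers $L[sp,i-1]$, extend it through the scan to cover $L[sp,d-1]$, and then split on the RMQ outcome. The time analysis matches the paper's as well.

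Two points of comparison are worth noting. First, the paper splits on whether $E[k]<sp$ rather than on whether $L[k]\in V$, and simply asserts that these coincide (``in which case it has not been reported before''). You rightly identify this as the crux and supply a proof via the witness position $m'$. However, the ancestor-RMQ subcase that you single out as ``the hardest part'' is an artifact of the order you chose: you report $L[k]$ \emph{before} recursing, whereas the paper's leftist algorithm recurses on $[d,k-1]$ first, then reports $L[k]$, then recurses on $[k+1,j]$. With the paper's order, one easily checks by induction on the call tree that every position reported before entering $[i,j]$ lies strictly to the left of $i$; hence when you reach the RMQ, all reported positions are $<d\le k$, the witness $m'$ is automatically $<k$, and $E[k]\ge m'\ge sp$ follows without any case analysis on ancestors. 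Your argument for the ancestor case is correct, but it is solving a difficulty that the paper's ordering avoids.

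Second, once $E[k]\ge sp$ is established, your $E$-chain argument to conclude $L[d,j]\subseteq V$ is sound but heavier than needed. Since $k$ is the argmin over $[d,j]$, every $m\in[d,j]$ has $E[m]\ge E[k]\ge sp$, so no position in $[d,j]$ is a leftmost occurrence in $[sp,ep]$; the leftmost occurrence of each such $L[m]$ therefore lies in $[sp,d-1]$ and is already in $V$ by the invariant. This is the paper's one-line version of the same step.
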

\begin{proof}
We prove that the
algorithm reports the leftmost occurrence in $L[sp,ep]$ of each distinct 
element. In particular, we prove by induction on $j-i$ that, when run on
any subrange $[i,j]$ of $[sp,ep]$, if (1) every leftmost occurrence in 
$L[sp,i-1]$ is already reported before processing $[i,j]$, then (2)
every leftmost occurrence in $L[sp,j]$ is reported after processing $[i,j]$.
Condition (1) holds for $[i,j]=[sp,ep]$, and we need to establish that (2)
holds after we process $[i,j]=[sp,ep]$.
The base case $i=j$ is trivial: 
the algorithm checks $L[i]$ and reports it if it was not reported before. 

On a larger interval $[i,j]$, the algorithm first reports $d-i$ occurrences of 
distinct elements in $L[i,d-1]$. Since these were not reported before, by
condition (1) they must be leftmost occurrences in $[sp,ep]$, and thus, after
reporting all the leftmost occurrences of $L[i,d-1]$, condition (1) holds for 
any range starting at $d$.

Now, we compute the position $k$ with minimum $E[k]$ in $E[d,j]$. Note that
$L[k]$ is a leftmost occurrence iff $E[k] < sp$, in which case it has not
been reported before and thus it should be reported by the algorithm. The 
algorithm, indeed, detects that it has not been reported before and therefore
recurses on $L[d,k-1]$, reports $L[k]$, and finally recurses 
on $L[k+1,j]$.\footnote{Since $L[k]$ does not appear in $L[d,k-1]$, the 
algorithm also works if $L[k]$ is reported before the recursive calls, which 
makes it real-time.}
Since those subintervals are inside $[i,j]$, we can apply induction.
In the call on $L[d,k-1]$, the invariant (1) holds and thus by induction we 
have that after the call the invariant (2) holds, so all the leftmost 
occurrences in $L[sp,k-1] = L[sp,d-1] \cdot L[d,k-1]$ have been reported. 
After we report $L[k]$ too, the invariant (1) also holds for the call
on $L[k+1,j]$, so by induction all the leftmost occurrences in $L[sp,j]$ have
been reported when the call returns.

In case $E[k] \ge sp$, $L[k]$ is not a leftmost occurrence in $L[sp,ep]$,
and moreover there are no leftmost occurrences in $L[d,j]$, so we should stop
since all the leftmost occurrences in $L[sp,j] = L[sp,d-1] \cdot L[d,j]$ are
already reported. Indeed, it must hold $sp \le E[k] < d$, since otherwise 
$E[E[k]] < E[k]$ and $d \le E[k] \le j$, contradicting the definition of
$k$. Therefore, by invariant (1), our algorithm already reported $L[k] =
L[E[k]]$, and hence it stops.

Then the algorithm is correct. As for the time, clearly the algorithm never 
reports the same element twice. The sequential part reports
$d-i$ documents in time $O(d-i+1)$. The extra $O(1)$ can be charged to the
caller, as well as the $O(1)$ cost of the subranges that do not produce any
result. Each calling procedure reports at least one element $L[k]$, so it 
can absorb those $O(1)$ costs, for a total cost of $O(\ndoc)$.
\end{proof}

Our actual algorithm is a variant of the leftist algorithm. When 
it takes the minimum $E[k]$ in $E[d,j]$, if $k=d$, it ignores that value and
takes instead $k=k'$, where $k'$ is some other value in $[d+1,j]$. Note that,
when processing $E[d,j]$ in the leftist algorithm, $L[d]$ is known to occur in 
$L[sp,d-1]$. Therefore, $E[d] \ge sp$, and if $k=d$, the leftist algorithm will
stop. The actual algorithm chooses instead position $k'$, but $E[k'] \ge E[d] 
\ge sp$, and therefore, as seen in the proof of Lemma~\ref{lem:alg}, the 
algorithm has already reported $L[k']$, and thus the actual
algorithm will also stop. Then the actual algorithm behaves identically to the
leftist algorithm, and thus it is also correct.

\bibliographystyle{plain}
\bibliography{paper}

\end{document}